\providecommand{\tabularnewline}{\\}
\newlength{\lyxlabelwidth}      
	\newenvironment{elabeling}[2][]%
	{\settowidth{\lyxlabelwidth}{#2}
		\begin{description}[font=\normalfont,style=sameline,
			leftmargin=\lyxlabelwidth,#1]}
	{\end{description}}
\theoremstyle{plain}
\newtheorem{lem}{\protect\lemmaname}
\theoremstyle{definition}
\newtheorem{defn}{\protect\definitionname}
\theoremstyle{remark}
\newtheorem*{rem*}{\protect\remarkname}
\theoremstyle{definition}
 \newtheorem{example}{\protect\examplename}
\theoremstyle{plain}
\newtheorem{thm}{\protect\theoremname}
\theoremstyle{plain}
\newtheorem{cor}{\protect\corollaryname}
\numberwithin{equation}{section}
\newcommand*\dif{\mathop{}\!\mathrm{d}}
\setlist[itemize]{noitemsep,topsep=2pt}
\setlist{noitemsep,topsep=2pt} 
\setlist[enumerate,1]{label=(\alph*),font=\normalfont}
\providecommand{\corollaryname}{Corollary}
\providecommand{\definitionname}{Definition}
\providecommand{\examplename}{Example}
\providecommand{\lemmaname}{Lemma}
\providecommand{\remarkname}{Remark}
\providecommand{\theoremname}{Theorem}
\begin{document}
\title{Beating the Best Constant Rebalancing Portfolio in Long-Term Investment:
A Generalization of the Kelly Criterion and Universal Learning Algorithm
for\\ Markets with Serial Dependence}
\author{Duy Khanh Lam\thanks{This work benefited from the dataset provided by my advisor, Giulio
Bottazzi at Scuola Superiore Sant'Anna, which was used in the experiments.}\\Scuola Normale Superiore\\ \\(Working paper, 1st draft)}
\maketitle
\begin{abstract}
In the online portfolio optimization framework, existing learning
algorithms generate strategies that yield significantly poorer cumulative
wealth compared to the best constant rebalancing portfolio in hindsight,
despite being consistent in asymptotic growth rate. While this unappealing
performance can be improved by incorporating more side information,
it raises difficulties in feature selection and high-dimensional settings.
Instead, the inherent serial dependence of assets' returns, such as
day-of-the-week and other calendar effects, can be leveraged. Although
latent serial dependence patterns are commonly detected using large
training datasets, this paper proposes an algorithm that learns such
dependence using only gradually revealed data, without any assumption
on their distribution, to form a strategy that eventually exceeds
the cumulative wealth of the best constant rebalancing portfolio.\smallskip

Moreover, the classical Kelly criterion, which requires independent
assets' returns, is generalized to accommodate serial dependence in
a market modeled as an independent and identically distributed process
of random matrices. In such a stochastic market, where existing learning
algorithms designed for stationary processes fail to apply, the proposed
learning algorithm still generates a strategy that asymptotically
grows to the highest rate among all strategies, matching that of the
optimal strategy constructed under the generalized Kelly criterion.
The experimental results with real market data demonstrate the theoretical
guarantees of the algorithm and its performance as expected, as long
as serial dependence is significant, regardless of the validity of
the generalized Kelly criterion in the experimental market. This further
affirms the broad applicability of the algorithm in general contexts.\bigskip
\end{abstract}
{\small\textit{Keywords}}{\small : Online Learning, Universality, Dynamic
Strategy, Generalized Kelly Criterion, Optimal Growth, Latent Patterns.}{\small\par}

\pagebreak\setlength{\abovedisplayskip}{2.5pt} 
\setlength{\abovedisplayshortskip}{2.5pt}
\setlength{\belowdisplayskip}{2.5pt} 
\setlength{\belowdisplayshortskip}{2.5pt} 

\section{Introduction}

In the periodically repeated game of the online portfolio optimization
framework, the goal of the investor is to construct a strategy as
a sequence of different portfolios, based only on past assets' returns,
such that it can grow as fast as the best constant strategy in hindsight,
which is formed by a single invariant portfolio, as more time periods
elapse. Although many learning algorithms surveyed in \citet{Erven2020}
have been proposed for strategy construction, the cumulative wealth
they yield appears to be significantly lower than that of the best
retrospective constant strategy, even though they still grow asymptotically
at the same rate. This makes them less appealing in practice, as their
cumulative wealth may eventually perform worse than the naive strategy
of buying and holding a single asset over time. A natural solution
that arises is to enhance the strategy by learning with more information
rather than relying solely on past assets' returns, such as the algorithms
that learn from finite-state side information online, as proposed
by \citet{Cover1996a}, and extended to continuous side information
by \citet{Bhatt2023}.\smallskip

However, leveraging side information inherently faces difficulties
in selecting external features that are truly useful among a large
pool of choices and in determining a proper size of them for the learning
process, not to mention the challenges in dealing with continuous
information. Instead, a type of implicit information that is inherently
available can be utilized for strategy construction, namely, the serial
dependence, i.e., the pattern of the assets' returns. In order to
detect a latent pattern, a large amount of available data is often
required for usable information extraction; thus, it seems challenging
to learn such patterns online, as the data is revealed gradually rather
than being available as a whole. Therefore, in such a market with
serial dependence, which deviates from being independent and identically
distributed (i.i.d.) or stationary, two interesting problems are brought
to attention immediately and partly constitute the motivations of
this paper: what is the form of the strategy that yields the highest
asymptotic growth rate, and how can the implicit information be learned
from the gradually revealed data to construct such an optimal strategy.\smallskip

\textbf{Overview of main results}. This paper discusses and proposes
a solution to the motivated challenges as mentioned. An online learning
algorithm is proposed, based only on past assets' returns, to construct
a dynamic strategy such that its yielded cumulative wealth can exceed
that yielded by the best constant strategy in hindsight over time,
without any statistical assumption on the market. This is achieved
through utilizing the serial dependence of the market by taking advantage
of the discrepancy in the characteristics of the subsequences of assets'
returns, in order to guarantee that the constructed strategy can grow
asymptotically faster than the best constant strategy in hindsight.
Moreover, in a stochastic market exhibiting serial dependence as a
block-wise i.i.d. process, a generalization of the classical Kelly
criterion, which was originally proposed by \citet{Kelly1956} and
is well known to be valid only in the i.i.d. market, is established
to determine a dynamic strategy that yields the highest asymptotic
growth rate among all strategies. In such a stochastic market but
with an unknown joint distribution, while the optimal strategy according
to the generalized Kelly criterion is not identifiable and the online
learning algorithms requiring stationarity of the process are not
applicable, the strategy constructed according to the proposed learning
algorithm is still able to achieve the highest asymptotic growth rate.\smallskip

\textbf{Paper organization and novelties}. To emphasize the generality
and broad applicability of the proposed learning algorithm, the paper's
sections are organized in an order where the algorithm's utilization
of the serial dependence of the assets' returns serves as the primary
motivation, rather than the asymptotic optimality of the strategy
and the generalization of the Kelly criterion. Specifically, they
are outlined along with their respective novelties as follows:\smallskip
\begin{elabeling}{00.00.0000}
\item [{\textit{Section~2}:}] Based on the brief summary recalling the
Universal Portfolio algorithm and the benchmarking constant strategy
proposed by \citet{Cover1991} in Section 2.1 as foundational concepts,
a novel online learning algorithm for strategy construction and its
associated $k$-\emph{cyclic constant strategy} are proposed in Section
2.2 as a significant extension to leverage the implicit serial dependence
of assets' returns. Then, the original sublinear upper bound established
for the Universal Portfolio strategy and the best constant strategy
in hindsight is extended to the proposed strategies, respectively,
which guarantees their consistency in growth rate over time. This
enables the strategy constructed according to the proposed algorithm
to exceed the best constant strategy in cumulative wealth over time,
which is essentially impossible for the Universal Portfolio strategy.$\vspace{0.5ex}$
\item [{\textit{Section~3}:}] In order to extend the classical Kelly criterion,
the concept of $k$-\emph{log-optimality}, defined as the condition
for constructing an optimal $k$-cyclic constant strategy with respect
to the joint distribution of the assets' returns, is formalized as
a generalization of the classical criterion to accommodate serial
dependence in a block-wise i.i.d. market represented by a stochastic
process of return matrices. Accordingly, the asymptotic optimality
in growth rate originally established for the classical criterion
by \citet{Breiman1960,Breiman1961} is extended to the generalized
one. As a consequence, in this market, without requiring knowledge
of the joint distribution, the proposed algorithm in Section 2 generates
a strategy that grows to the highest asymptotic rate among all dynamic
strategies, matching that of the optimal $k$-cyclic constant strategy.$\vspace{0.5ex}$
\item [{\textit{Section~}4:}] The proposed learning algorithm is tested
on a real market dataset of $6798$ days, with tuning of the parameter
$k$ to investigate its impact on the constructed strategy\textquoteright s
performance. The results demonstrate that the tested strategy outperforms
the Universal Portfolio strategy in terms of cumulative wealth for
all choices of $k$. With proper tuning of $k$, the cumulative wealth
yielded by the constructed strategy exceeds that of the best constant
strategy in hindsight. Moreover, the results show that the asymptotic
growth rate of the Universal Portfolio strategy is not the highest,
thereby invalidating the Kelly criterion in the experimental market.
This observation supports the existence of significant serial dependence
in the assets' returns process, which is further strengthened by evidence
that the best $k$-cyclic constant strategy yields higher cumulative
wealth than that yielded by the best constant strategy, with a large
gap as $k$ increases.
\end{elabeling}

\section{Model setting, key concepts and strategy proposal}

Consider a number $m\geq2$ of risky assets in the stock market during
discrete time periods $n\in\mathbb{N}_{+}$. For convenience in expressing
assets' returns later, let the vector $p_{n}=\big(p_{n,1},...,p_{n,m}\big)$
denote the prices of $m$ assets at time $n$, where $p_{n,j}\in\mathbb{R}_{++}$
is the price of the $j$-th asset. Subsequently, the real-valued random
vector $X_{n}\coloneqq\big(X_{n,1},...,X_{n,m}\big)\in\mathbb{R}_{++}^{m}$
represents the assets' returns at time $n$, while the vector $x_{n}\coloneqq\big(x_{n,1},...,x_{n,m}\big)$
denotes its corresponding realization, where $x_{n,j}\coloneqq p_{n,j}/p_{n-1,j}$
for the $j$-th asset. Accordingly, let $x_{1}^{n}\coloneqq\big\{ x_{i}\big\}_{i=1}^{n}$
be shorthand for the sequence of realizations of the respective sequence
$\big\{ X_{i}\big\}_{i=1}^{n}$, so that an infinite sequence of realizations
is denoted as $x_{1}^{\infty}$. The simplex $\mathcal{B}^{m}\coloneqq\big\{\beta\in\mathbb{R}^{m}:\,{\displaystyle {\textstyle \sum}}_{{\scriptstyle j=1}}^{{\scriptstyle m}}\beta_{j}=1,\,\beta_{j}\geq0\big\}$
represents the domain of all selectable no-short portfolios, and the
notation $\big<b,x_{n}\big>$, where $\big<\cdot,\cdot\big>$ denotes
the scalar product of two vectors, represents the return of a portfolio
$b\in\mathcal{B}^{m}$ with respect to $x_{n}$.\smallskip

Since the future assets' returns are unforeseeable, a portfolio selection
at any time is made only causally depending on the observed data;
thus, let $b_{n}:\mathbb{R}_{++}^{m\times(n-1)}\to\mathcal{B}^{m}$
denote the portfolio selected at time $n$ as a function of the past
realizations $x_{1}^{n-1}$. An infinite sequence of portfolios made
over time, denoted as $\big(b_{n}\big)\coloneqq\big\{ b_{n}\big\}_{n=1}^{\infty}$,
is termed a \textit{strategy}; thus, if the strategy takes the form
of a fixed portfolio $b$ over time, it is called a \emph{constant
strategy} and denoted as $\big(b\big)$, without time index, to distinguish
it from a dynamic one. Accordingly, let the initial fortune be $S_{0}=1$
by convention, and assuming no commission fees, the \emph{cumulative
wealth} and the \emph{growth rate} after $n$ time periods of investment
yielded by a generic strategy $\big(b_{n}\big)$ are defined respectively
as:
\[
S_{n}\big(b_{n}\big)\coloneqq{\displaystyle {\displaystyle \prod_{i=1}^{n}}\big<b_{i},x_{i}\big>}\text{ and }W_{n}\big(b_{n}\big)\coloneqq\dfrac{1}{n}\log S_{n}\big(b_{n}\big)={\displaystyle \dfrac{1}{n}\sum_{i=1}^{n}\log\big<b_{i},x_{i}\big>.}
\]
Here, $S_{n}\big(b_{n}\big)$ and $W_{n}\big(b_{n}\big)$ serve as
shorthands, implicitly depending on the realization sequence $x_{1}^{n}$
of the assets' returns, without making this explicit in the notation.
Noting that, since the assets' returns are positive and the strategy
is self-financed, the resulting $S_{n}\big(b_{n}\big)$ and $W_{n}\big(b_{n}\big)$
are always positive, thus the investment game lasts indefinitely.

\subsection{Brief on the best constant strategy, Universal Portfolio, and consistency}

Before introducing the extended benchmarking strategy and proposing
a learning algorithm in the next section, this section is dedicated
to recalling the essential foundation and basic concepts of the Universal
Portfolio and its corresponding benchmarking strategy. As first posed
by \citet{Cover1991}, for each sequence of realizations of assets'
returns $x_{1}^{n}$, we can identify an optimal constant strategy,
i.e., a fixed portfolio, that solves $\max_{b\in\mathcal{B}^{m}}S_{n}\big(b\big)$,
which is considered the benchmarking strategy, so that a constructed
dynamic strategy $\big(b_{n}\big)$ should asymptotically approach
it in growth rate over time. This goal is formally termed the \emph{consistency}
criterion, as follows:
\begin{equation}
\lim_{n\to\infty}\min_{b\in\mathcal{B}^{m}}\big(W_{n}\big(b_{n}\big)-W_{n}\big(b\big)\big)=0.\label{consistency}
\end{equation}
To avoid misunderstanding that the realizations of the random vectors
of assets' returns imply they are known, it is worth noting that the
infinite sequence $x_{1}^{\infty}$ is treated as deterministic but
unknown.\smallskip

Since the best constant strategy at any time $n$ is only known in
hindsight, the strategy $\big(b_{n}\big)$ must guarantee the consistency
condition (\ref{consistency}) universally for any infinite sequence
of unforeseeable realizations $x_{1}^{\infty}$. The following Universal
Portfolio algorithm constructs such a desired strategy that employs
only the available observed realizations of assets' returns:
\begin{equation}
b_{n}\coloneqq{\displaystyle \dfrac{{\displaystyle \int_{\mathcal{B}^{m}}{\displaystyle b}}S_{n-1}\big(b\big)\mu\big(b\big)\dif b}{{\displaystyle \int_{\mathcal{B}^{m}}S_{n-1}\big(b\big)\mu\big(b\big)\dif b}}},\,\forall n,\label{UP algorithm}
\end{equation}
where $\mu(\cdot)$ denotes a probability density over the simplex
$\mathcal{B}^{m}$. Noting that if the density $\mu(\cdot)$ is symmetric,
then we can write in particular $b_{1}={\displaystyle \int_{\mathcal{B}^{m}}{\displaystyle b}}\mu\big(b\big)\dif b=\big(1/m,...,1/m\big)$.\smallskip

By construction, the Universal Portfolio strategy $\big(b_{n}\big)$,
as described above, regardless of the choice of the density $\mu(\cdot)$,
yields a cumulative wealth that is always strictly bounded by that
of the best constant strategy in hindsight at any time $n$, as follows:
\begin{align}
S_{n}\big(b_{n}\big)={\displaystyle \prod_{i=1}^{n}\dfrac{{\displaystyle \int_{\mathcal{\mathcal{B}}^{m}}\big<b,x_{i}\big>}S_{i-1}\big({\displaystyle b\big)}\mu\big(b\big)\dif b}{{\displaystyle \int_{\mathcal{\mathcal{B}}^{m}}S_{i-1}\big({\displaystyle b\big)}\mu\big(b\big)\dif b}}} & {\displaystyle =\int_{\mathcal{\mathcal{B}}^{m}}S_{n}\big({\displaystyle b\big)}\mu\big(b\big)\dif b\text{ \,\,\,\,\,\,\,\,\,(by telescoping)}}\nonumber \\
 & <\max_{b\in\mathcal{B}^{m}}S_{n}\big(b\big)\int_{\mathcal{\mathcal{B}}^{m}}\mu\big(b\big)\dif b=\max_{b\in\mathcal{B}^{m}}S_{n}\big(b\big),\,\forall n.\label{UP bound}
\end{align}
This is the disadvantage mentioned in the introduction, namely that
the cumulative wealth of the Universal Portfolio strategy is not as
good as expected in practice compared with its benchmarking strategy.
Nonetheless, the strategy $\big(b_{n}\big)$ can satisfy the consistency
criterion by an appropriate choice of density $\mu(\cdot)$. This
is guaranteed by Lemma \ref{Lemma 1}, which shows a sublinear upper
bound in the worst case of the realization sequence $x_{1}^{n}$ for
the difference $\max_{b\in\mathcal{B}^{m}}\left|\log S_{n}\big(b_{n}\big)-\log S_{n}\big(b\big)\right|$.
\begin{lem}
Consider the Universal Portfolio strategy $\big(b_{n}\big)$ constructed
according to (\ref{UP algorithm}). If $\mu(\cdot)$ is chosen as
the uniform density, then:\label{Lemma 1}
\[
{\displaystyle \max_{b\in\mathcal{B}^{m}}\max_{x_{1}^{n}}\big(\log S_{n}\big(b\big)-\log S_{n}\big(b_{n}\big)\big)\leq\big(m-1\big)\log\big(n+1\big),\text{ \ensuremath{\forall n}}}.
\]
Alternatively, if $\mu(\cdot)$ is chosen as the Dirichlet $\big(1/2,...,1/2\big)$
density, then:
\[
{\displaystyle \max_{b\in\mathcal{B}^{m}}\max_{x_{1}^{n}}\big(\log S_{n}\big(b\big)-\log S_{n}\big(b_{n}\big)\big)\leq\frac{m-1}{2}\log\big(n+1\big)+\log2,\text{ \ensuremath{\forall n}}}.
\]
\end{lem}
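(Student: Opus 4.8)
The plan is to start from the telescoping identity already recorded in (\ref{UP bound}), namely $S_n\big(b_n\big)=\int_{\mathcal{B}^m}S_n(b)\mu(b)\dif b$. Writing $b^*$ for a maximizer of $S_n(\cdot)$ over $\mathcal{B}^m$, the quantity to be controlled is exactly $\log\big(S_n(b^*)\big/\int_{\mathcal{B}^m}S_n(b)\mu(b)\dif b\big)$. The decisive feature is that this ratio will be bounded below by a quantity depending only on $m$ and $n$, not on the realizations; consequently the worst case over $x_1^n$ needs no separate treatment and the two displayed inequalities follow by taking logarithms.

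Next I would expand the wealth as a polynomial in $b$. Since $S_n(b)=\prod_{i=1}^n\langle b,x_i\rangle=\prod_{i=1}^n\sum_{j=1}^m b_j x_{i,j}$, collecting terms according to the exponent vector $k=(k_1,\dots,k_m)$ with $k_j\geq0$ and $\sum_j k_j=n$ yields $S_n(b)=\sum_k c_k\,b^k$, where $b^k\coloneqq\prod_j b_j^{k_j}$ and each coefficient $c_k$ is a nonnegative sum of products of returns. Because every $c_k\geq0$, the elementary bound $\big(\sum_k a_k\big)\big/\big(\sum_k d_k\big)\geq\min_k a_k/d_k$ for nonnegative terms gives
\[
\frac{\int_{\mathcal{B}^m}S_n(b)\mu(b)\dif b}{S_n(b^*)}\geq\min_k\frac{\int_{\mathcal{B}^m}b^k\mu(b)\dif b}{(b^*)^k}\geq\min_k\frac{\int_{\mathcal{B}^m}b^k\mu(b)\dif b}{\max_{b\in\mathcal{B}^m}b^k},
\]
where the explicit value $\max_{b\in\mathcal{B}^m}b^k=\prod_j(k_j/n)^{k_j}$, attained at $b_j=k_j/n$, will be used.

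The uniform case then rests on a clean cancellation. For the uniform density the Dirichlet integral is $\int_{\mathcal{B}^m}b^k\mu(b)\dif b=(m-1)!\,\prod_j k_j!\big/(n+m-1)!$, while the multinomial inequality $\binom{n}{k}\prod_j(k_j/n)^{k_j}\leq1$ (this product is a single nonnegative term in the expansion of $\big(\sum_j k_j/n\big)^n=1$) gives $1\big/\max_{b}b^k\geq n!\big/\prod_j k_j!$. Multiplying, the factor $\prod_j k_j!$ cancels and every ratio is bounded below by $(m-1)!\,n!\big/(n+m-1)!=1/\binom{n+m-1}{m-1}$, uniformly in $k$. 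Hence the regret is at most $\log\binom{n+m-1}{m-1}\leq(m-1)\log(n+1)$, the final step using $\binom{n+m-1}{m-1}=\prod_{j=1}^{m-1}(n+j)/j\leq(n+1)^{m-1}$. (The simpler ``shrinking-simplex'' lower bound, using concavity of $\log S_n$ on a copy of $\mathcal{B}^m$ scaled toward $b^*$, would give only $(m-1)\log(n+1)+O(1)$, so this exact computation is what yields the stated constant.)

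For the Dirichlet$(1/2,\dots,1/2)$ density the same reduction applies verbatim up to the final display, but now $\int_{\mathcal{B}^m}b^k\mu(b)\dif b=\frac{\Gamma(m/2)}{\Gamma(n+m/2)}\prod_j\frac{\Gamma(k_j+1/2)}{\Gamma(1/2)}$, and the exact cancellation is replaced by Gamma-function estimates. I expect this to be the \emph{main obstacle}: one must bound $\Gamma(k_j+1/2)\big/\big(\Gamma(1/2)\,k_j!\big)$ from below in a way that, after combining with the multinomial inequality and taking the product over $j$, reproduces both the improved coefficient $(m-1)/2$ and the additive constant $\log 2$. Securing the right per-coordinate inequality uniformly over all admissible $k$ and tracking the constants through the product and through $\Gamma(m/2)/\Gamma(n+m/2)$ is the delicate step; the remainder of the argument is identical to the uniform case.
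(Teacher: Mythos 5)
The paper itself gives no proof of this lemma beyond a pointer to Cover (1991) and Cover and Ordentlich (1996), so the relevant comparison is with the argument in those sources, which is exactly what you are reconstructing. Your treatment of the uniform case is complete and correct: the telescoping identity $S_n\big(b_n\big)=\int_{\mathcal{B}^m}S_n\big(b\big)\mu\big(b\big)\dif b$, the reduction of a ratio of polynomials with common nonnegative coefficients to the minimum ratio of matching terms, the Dirichlet moment $\int_{\mathcal{B}^m}b^k\mu\big(b\big)\dif b=(m-1)!\,\prod_j k_j!/(n+m-1)!$, the bound $\max_{b\in\mathcal{B}^m}b^k=\prod_j(k_j/n)^{k_j}\leq\prod_j k_j!/n!$ from the multinomial theorem, and the final estimate $\binom{n+m-1}{m-1}=\prod_{j=1}^{m-1}(n+j)/j\leq(n+1)^{m-1}$ are precisely the steps of the cited proof, and each is justified as you state it.

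The Dirichlet $\big(1/2,\dots,1/2\big)$ half, however, is not actually proved. You write the correct moment formula $\int_{\mathcal{B}^m}b^k\mu\big(b\big)\dif b=\frac{\Gamma(m/2)}{\Gamma(n+m/2)}\prod_j\frac{\Gamma(k_j+1/2)}{\Gamma(1/2)}$ and correctly identify that what remains is a lower bound on $\min_k\int_{\mathcal{B}^m}b^k\mu\big(b\big)\dif b\,\big/\max_b b^k$ of the form $\tfrac{1}{2}(n+1)^{-(m-1)/2}$, uniformly over all multi-indices $k$ with $|k|=n$; but you then label this ``the delicate step'' and stop. That step is the entire content of the second inequality: unlike the uniform case there is no clean cancellation of $\prod_j k_j!$, and the needed estimate is a standalone lemma in Cover and Ordentlich, obtained by combining $\frac{\Gamma(k+1/2)}{\Gamma(1/2)\,k!}=4^{-k}\binom{2k}{k}\geq\frac{1}{2\sqrt{k}}$ for $k\geq1$ with Stirling-type control of $\prod_j k_j!\big/\prod_j k_j^{k_j}$ and of $\Gamma(n+m/2)$ (or, equivalently, by an induction on $m$ reducing to an explicit binomial inequality for $m=2$). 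Until that estimate is supplied, the bound $\frac{m-1}{2}\log(n+1)+\log2$ is asserted rather than derived: the first displayed inequality of the lemma is fully established by your argument, the second is not.
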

\begin{proof}
See the proofs in \citet{Cover1991,Cover1996a}.
\end{proof}

\subsection{The best $\bm{k}$-cyclic constant strategy and universal learning
algorithm}

The mechanism of the best constant strategy is to utilize solely the
variation of the assets' returns over time periods, which can be generalized
to address the inherent drawback of the Universal Portfolio strategy
described in (\ref{UP bound}), without requiring additional side
information. Specifically, for any sequence of assets' returns, we
can partition it into multiple subsequences with distinct characteristics.
This partition method allows for better exploitation of the inherent
discrepancy between each time within a fixed window horizon of investment.
The formal definition of the $k$-cyclic constant strategy, for any
integer $k\geq1$, interpreted as the cycle length of a segment of
time, is introduced in Definition \ref{k-wise strategy definition}.
Then, Example \ref{Example k wise strategy} provides some instances
of $k$-cyclic strategies in real-world practice contexts to clarify
its definition and illustrate its conceptual motivation, while its
growth rate behavior in relation to the choice of length $k$ is considered
later in Theorem \ref{Theorem 1}.
\begin{defn}
Given an integer $k\geq1$ as the cycle length, a $k$-cyclic constant
strategy $\big(b_{n}^{k}\big)$ is defined by a looping collection
of fixed portfolios $\big(b^{1},...,b^{k}\big)$, where $b^{i}\in\mathcal{B}^{m}$
for all $i\in\big\{1,...,k\big\}$, as:\label{k-wise strategy definition}
\[
\big(b_{n}^{k}\big)\coloneqq\big\{\big(b_{kt+1},b_{kt+2},...,b_{kt+k}\big)\big\}_{t=0}^{\infty},\text{ where }b_{kt+i}\coloneqq b^{i},\,\forall i\leq k,\,\forall t\geq0.
\]
In the special case of the 1-cyclic constant strategy, it is equivalent
to a constant strategy.\medskip

As the main motivation behind the concept in Definition \ref{k-wise strategy definition},
the number $k$ of a $k$-cyclic constant strategy $\big(b_{n}^{k}\big)$
is a changeable parameter that serves to tune the corresponding strategy
with a greater degree of freedom compared to the basic constant strategy.
In detail, since the considered assets yield returns that fluctuate
over time, taking values both above and below one, a constant strategy
that allocates capital according to a fixed portfolio regardless of
return levels inherently limits its ability to adjust when returns
decline. In contrast, the $k$-cyclic constant strategy relaxes this
constraint by allowing rebalancing across several fixed portfolios
over time. If $k$ is set equal to $n$ at the observed time $n$,
the $k$-cyclic constant strategy allows investing the entire periodic
capital into the assets with the highest returns from time $1$ to
time $n$; however, this practice is infeasible, as the future market
is unforeseeable. Instead, $k$ must be chosen sufficiently small
relative to the investment period length, as in the practices illustrated
in Example \ref{Example k wise strategy}, so that the number of generated
cycles of length $k$ is sufficiently large.
\end{defn}
\begin{rem*}
In practice, a $k$-cyclic constant strategy is flexible in its cycle
indexation, as the time periods $n$ need not be equally spaced. Rather
than indexing each trading day as a separate period, as mentioned
in Example \ref{Example k wise strategy}, an investor can group three
consecutive days, from Tuesday to Thursday, as a single period to
capture anomalies that regularly occur on one of these days.\smallskip
\end{rem*}
\begin{example}
An investor applies a $5$-cyclic constant strategy defined by a fixed
weekly routine of portfolio allocations, designed to exploit systematic
volatility and anomalies across trading days, which is commonly referred
to as the day-of-the-week effect in the literature. This practice
reflects the investor\textquoteright s belief that a hidden pattern
exists in the daily assets' returns within a week, which can be exploited
in constructing a portfolio. Typical observations include the tendency
of investors to sell on Fridays and buy on Mondays, or that returns
tend to be stronger in the middle of the week. Specifically, a distinct
portfolio is assigned to each day from Monday to Friday, and this
routine is repeated consistently throughout the investment period.
As another example in a relatively higher-frequency context, it is
also possible to exploit volatility over much shorter intervals, such
as hours. In this case, the investor may assign a fixed portfolio
to a specific hour within a trading day and repeat this allocation
daily.\label{Example k wise strategy}
\end{example}
By definition, a $k$-cyclic constant strategy $\big(b_{n}^{k}\big)$
for any $k\geq2$ is a dynamic strategy, rather than a constant one
defined simply by a fixed portfolio. Therefore, the best $k$-cyclic
constant strategy at any time period $n$ is determined by a collection
of fixed portfolios that solves the following cumulative wealth maximization
problem:
\[
\max_{(b^{1},...,b^{k})\in\mathcal{B}^{m\times k}}S_{n}\big(b_{n}^{k}\big)=\max_{(b^{1},...,b^{k})\in\mathcal{B}^{m\times k}}{\displaystyle \prod_{t\geq0:\,kt+i\leq n}}\big<b^{i},x_{kt+i}\big>,\,\forall i\leq\min\big\{ k,n\big\},\,\forall n,
\]
which, similarly to the best constant strategy, is only known in hindsight
as it depends on the sequence of realizations of assets' returns $x_{1}^{n}$.
An immediate consequence is that the best $k$-cyclic constant strategy
yields strictly greater cumulative wealth and growth rate than the
best constant strategy at any time $n$, for all $k\geq2$. Accordingly,
since the best $k$-cyclic constant strategy serves as the benchmark,
the original consistency criterion (\ref{consistency}) for a potential
strategy $\big(b_{n}\big)$, defined in the case $k=1$, is generalized
to the following for:
\begin{equation}
\lim_{n\to\infty}\min_{(b^{1},...,b^{k})\in\mathcal{B}^{m\times k}}\big(W_{n}\big(b_{n}\big)-W_{n}\big(b_{n}^{k}\big)\big)=0.\label{k-cyclic consistency}
\end{equation}
In the context of unforeseeable future realizations of assets' returns,
the subsequent algorithm is proposed to construct a strategy that
guarantees the desired consistency (\ref{k-cyclic consistency}),
using only the observed assets' returns over time.\smallskip

\textbf{Strategy proposal}. Consider a strategy $\big(b_{n}\big)$
constructed as follows:
\begin{equation}
b_{i}\coloneqq\Big(\frac{1}{m},...,\frac{1}{m}\Big)\text{ and }{\displaystyle b_{kt+i}\coloneqq{\displaystyle \dfrac{{\displaystyle \int_{\mathcal{B}^{m}}{\displaystyle b}}{\displaystyle {\displaystyle \prod_{j=0}^{t-1}}\big<b,x_{kj+i}\big>}\mu\big(b\big)\dif b}{{\displaystyle \int_{\mathcal{B}^{m}}{\displaystyle \prod_{j=0}^{t-1}}\big<b,x_{kj+i}\big>\mu\big(b\big)\dif b}}},\,\forall i\leq k,\,\forall t\geq1,}\label{proposal strategy}
\end{equation}
where $\mu(\cdot)$ is chosen either as the uniform density or the
Dirichlet $\big(1/2,...,1/2\big)$ density.\smallskip

A closer inspection of the above algorithm\textquoteright s mechanism
reveals that it transforms the original repeated game, in which the
best constant strategy serves as the benchmark, into a simultaneously
repeated game. Specifically, since a $k$-cyclic constant strategy
simultaneously loops over a set of $k$ portfolios, the algorithm
decomposes the single investment game into $k$ parallel repeated
games, each of which identifies one of the $k$ best constant strategies
at any time $n\geq k$. Based on this decomposition, the algorithm
(\ref{proposal strategy}) employs the Universal Portfolio strategy
defined in (\ref{UP algorithm}) as a recursive subroutine, applied
independently to each of the $k$ subsequences of the observed realizations
$x_{1}^{n}$ of assets' returns. Consequently, the strategy $\big(b_{n}\big)$
constructed by this algorithm satisfies the consistency criterion
(\ref{k-cyclic consistency}), analogous to how the Universal Portfolio
strategy satisfies the original consistency condition (\ref{consistency}).
This is demonstrated in Theorem \ref{Theorem 1}, which establishes
a sublinear upper bound in the worst case of the realization sequence
$x_{1}^{n}$ for the difference $\max_{(b^{1},...,b^{k})\in\mathcal{B}^{m\times k}}\left|\log S_{n}\big(b_{n}\big)-\log S_{n}\big(b_{n}^{k}\big)\right|$.\footnote{By decomposing the sequence of realizations $x_{1}^{n}$ into $k$
subsequences, a desired strategy $\big(b_{n}\big)$ can be constructed
by applying learning algorithms surveyed in the paper \citet{Erven2020}.
Subsequently, an upper bound for the difference $\max_{(b^{1},...,b^{k})\in\mathcal{B}^{m\times k}}\left|\log S_{n}\big(b_{n}\big)-\log S_{n}\big(b_{n}^{k}\big)\right|$
can be established using a similar argument to the proof of Theorem
\ref{Theorem 1}, though not in the worst-case setting of $x_{1}^{n}$.
However, aside from differences in convergence speed, employing learning
algorithms other than the one proposed in (\ref{proposal strategy})
requires caution, as they often rely on assumptions applied over the
original sequence of realizations, which may not hold for each of
the $k$ decomposed subsequences.}
\begin{thm}
Consider the strategy $\big(b_{n}\big)$ constructed according to
(\ref{proposal strategy}), with any $k\geq2$. If $\mu(\cdot)$ is
chosen as the uniform density, then:\label{Theorem 1}
\[
0<{\displaystyle \max_{(b^{1},...,b^{k})\in\mathcal{B}^{m\times k}}\max_{x_{1}^{n}}\big(\log S_{n}\big(b_{n}^{k}\big)-\log S_{n}\big(b_{n}\big)\big)\leq k\big(m-1\big)\log\big(n+1\big),\text{ \ensuremath{\forall n}}}.
\]
Alternatively, if $\mu(\cdot)$ is chosen as the Dirichlet $\big(1/2,...,1/2\big)$
density, then:
\[
0<{\displaystyle \max_{(b^{1},...,b^{k})\in\mathcal{B}^{m\times k}}\max_{x_{1}^{n}}\big(\log S_{n}\big(b_{n}^{k}\big)-\log S_{n}\big(b_{n}\big)\big)\leq\frac{k}{2}\big(m-1\big)\log\big(n+1\big)+k\log2,\text{ \ensuremath{\forall n}}}.
\]
\end{thm}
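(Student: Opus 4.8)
The plan is to reduce the statement to $k$ parallel applications of Lemma~\ref{Lemma 1}, making precise the decomposition into $k$ simultaneous games described just before the theorem. First I would split $\{1,\dots,n\}$ into the $k$ residue classes modulo $k$: for each $i\le k$ write $y^{(i)}_s\coloneqq x_{k(s-1)+i}$ for $s=1,\dots,n_i$, where $n_i\coloneqq\big|\{t\ge 0:kt+i\le n\}\big|$, so that $\sum_{i=1}^k n_i=n$ and each $n_i\le n$. Both wealths then factor across these classes:
\[
S_n\big(b_n\big)=\prod_{i=1}^k\prod_{s=1}^{n_i}\langle b_{k(s-1)+i},y^{(i)}_s\rangle,\qquad \max_{(b^1,\dots,b^k)}S_n\big(b_n^k\big)=\prod_{i=1}^k\ \max_{b^i\in\mathcal{B}^m}\prod_{s=1}^{n_i}\langle b^i,y^{(i)}_s\rangle,
\]
the latter because the portfolios $b^i$ occur in disjoint factors, so the $k$-cyclic maximization separates into $k$ independent constant-portfolio maximizations, one per class.

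The second step is to recognize the restriction of the constructed strategy to class $i$ as the Universal Portfolio of (\ref{UP algorithm}) run on $y^{(i)}_1,y^{(i)}_2,\dots$. Substituting the reindexing into (\ref{proposal strategy}) turns $b_{k(s-1)+i}$ into
\[
b_{k(s-1)+i}=\frac{\int_{\mathcal{B}^m}b\,\prod_{r=1}^{s-1}\langle b,y^{(i)}_r\rangle\,\mu(b)\,\dif b}{\int_{\mathcal{B}^m}\prod_{r=1}^{s-1}\langle b,y^{(i)}_r\rangle\,\mu(b)\,\dif b},
\]
which is exactly the UP portfolio for the $s$-th step on the subsequence, the $s=1$ term being the uniform portfolio $(1/m,\dots,1/m)$ that matches the UP initialization. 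Consequently the per-class logarithmic gap $\log\max_{b^i}\prod_{s=1}^{n_i}\langle b^i,y^{(i)}_s\rangle-\log\prod_{s=1}^{n_i}\langle b_{k(s-1)+i},y^{(i)}_s\rangle$ is precisely the quantity controlled by Lemma~\ref{Lemma 1} with $n$ replaced by $n_i$.

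For the upper bound I would then assemble the pieces. The $i$-th gap depends only on the realizations $y^{(i)}_1,\dots,y^{(i)}_{n_i}$, and the residue classes partition the coordinates of $x_1^n$; hence the worst-case maximization over $x_1^n$ separates into a sum of independent per-class worst cases. Applying Lemma~\ref{Lemma 1} to each class and bounding $n_i\le n$ yields $\sum_{i=1}^k(m-1)\log(n_i+1)\le k(m-1)\log(n+1)$ in the uniform case, and the analogous $\tfrac{k}{2}(m-1)\log(n+1)+k\log 2$ in the Dirichlet case. For the strict lower bound $0<\cdots$, I would invoke the strict inequality (\ref{UP bound}) on a single class: whenever some realization has one asset strictly outperforming within a class (already at $n_i=1$, where $\tfrac1m\sum_j y^{(i)}_{1,j}<\max_j y^{(i)}_{1,j}$), the UP wealth on that class lies strictly below the best-constant wealth there, so the summed gap, and therefore its maximum over $x_1^n$, is strictly positive.

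The main thing to get right is the separation invoked in the third step, namely that $\max_{x_1^n}$ of the sum of per-class gaps equals the sum of the individual class-wise maxima. This hinges on the residue classes being disjoint, so that the realizations feeding distinct classes may be varied independently and the joint worst case is attained blockwise; I would also double-check the exactness of the reindexing in the second step, since it is precisely the uniform initialization in (\ref{proposal strategy}) that makes the restriction coincide with UP on the nose rather than only approximately. Once these two points are secured, the remainder is the routine bookkeeping of summing $k$ copies of Lemma~\ref{Lemma 1}.
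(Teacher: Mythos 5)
Your proposal is correct and follows essentially the same route as the paper's proof: decompose the time indices into the $k$ residue classes, observe that both $\max_{(b^1,\dots,b^k)}S_n\big(b_n^k\big)$ and $S_n\big(b_n\big)$ factor across these classes with the constructed strategy restricting to a Universal Portfolio on each subsequence, apply Lemma~\ref{Lemma 1} classwise with the subsequence length in place of $n$, and obtain the strict lower bound from the inequality (\ref{UP bound}). The separation of the worst-case maximization over $x_1^n$ into per-class maxima, which you flag as the point to verify, is exactly the step the paper also takes (justified by the disjointness of the subsequences), so no further work is needed.
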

\begin{proof}
To establish the required upper bound, it suffices to show that the
best $k$-cyclic constant strategy at any time $n\geq k$ can be identified
by detecting, separately, $k$ best constant strategies with respect
to $k$ subsequences. Indeed, since the following product factorization
holds:
\[
\max_{(b^{1},...,b^{k})\in\mathcal{B}^{m\times k}}S_{n}\big(b_{n}^{k}\big)=\prod_{i=1}^{\min\{k,n\}}\max_{b\in\mathcal{B}^{m}}{\displaystyle \prod_{t\geq0:\,kt+i\leq n}}\big<b,x_{kt+i}\big>,\,\forall n,
\]
the algorithm constructs a Universal Portfolio strategy for each infinite
subsequence $\big\{ x_{kt+i}\big\}_{t=0}^{\infty}$, where $1\leq i\leq k$,
of the infinite sequence $x_{1}^{\infty}$.\smallskip

Hence, if $\mu(\cdot)$ is chosen as the uniform density, applying
the upper bound in Lemma \ref{Lemma 1} yields the following bound
on the difference of interest:
\begin{align*}
0 & \leq\max_{(b^{1},...,b^{k})\in\mathcal{B}^{m\times k}}\max_{x_{1}^{n}}\big(\log S_{n}\big(b_{n}^{k}\big)-\log S_{n}\big(b_{n}\big)\big)\\
 & =\sum_{i=1}^{\min\{k,n\}}\max_{b\in\mathcal{B}^{m}}\max_{x_{1}^{n}}{\displaystyle \sum_{t\geq0:\,kt+i\leq n}\big(\log}\big<b,x_{kt+i}\big>-{\displaystyle \log}\big<b_{kt+i},x_{kt+i}\big>\big)\\
 & \leq\sum_{i=1}^{\min\{k,n\}}\big(m-1\big)\log\Big(\left\lfloor \frac{n-i}{k}+1\right\rfloor +1\Big)\\
 & \leq k\big(m-1\big)\log\big(n+1\big),\,\forall n,
\end{align*}
where the value $\left\lfloor \big(n-i\big)/k+1\right\rfloor $ represents
the number of terms in the set $\big\{ x_{kt+i}:\,kt+i\leq n\big\}$,
i.e., its cardinality. Meanwhile, the desired lower bound follows
directly from the property (\ref{UP bound}) of the Universal Portfolio
strategy. The same reasoning applies when $\mu(\cdot)$ is chosen
as the Dirichlet $\big(1/2,...,1/2\big)$ density, which completes
the proof.
\end{proof}
Since the best $k$-cyclic constant strategy with $k\geq2$ always
outperforms the best constant strategy in terms of cumulative wealth,
Theorem \ref{Theorem 1} implies that the strategy $\big(b_{n}\big)$,
constructed according to (\ref{proposal strategy}), can eventually
exceed the cumulative wealth of the best constant strategy in hindsight.
This is due to the following inequality in asymptotic growth rate:
\[
\limsup_{n\to\infty}\max_{b\in\mathcal{B}^{m}}\big(W_{n}\big(b\big)-W_{n}\big(b_{n}\big)\big)=\limsup_{n\to\infty}\max_{b\in\mathcal{B}^{m}}\min_{(b^{1},...,b^{k})\in\mathcal{B}^{m\times k}}\big(W_{n}\big(b\big)-W_{n}\big(b_{n}^{k}\big)\big)\leq0.
\]
Moreover, Corollary \ref{corollary 1} below highlights the motivation
for increasing the cycle length: the best $k$-cyclic constant strategy
guarantees strictly greater cumulative wealth than the best $h$-cyclic
constant strategy in hindsight at any time $n$, provided that $k$
is divisible by $h$. Consequently, the corresponding constructed
strategy $\big(b_{n}\big)$ associated with cycle length $k$ can
yield higher asymptotic performance than one constructed with cycle
length $h$. However, as the cycle length increases, the convergence
of the asymptotic growth rate of the strategy $\big(b_{n}\big)$ to
that of the best $k$-cyclic constant strategy may become slower due
to the larger upper bound established in Theorem \ref{Theorem 1}.
\begin{cor}
Consider two $k$-cyclic and $h$-cyclic constant strategies with
$k\equiv0\mod h$. Then:\label{corollary 1}
\[
\max_{(b^{1},...,b^{h})\in\mathcal{B}^{m\times h}}S_{n}\big(b_{n}^{h}\big)\leq\max_{(b^{1},...,b^{k})\in\mathcal{B}^{m\times k}}S_{n}\big(b_{n}^{k}\big),\,\forall n.
\]
Hence, the strategy $\big(b_{n}\big)$, constructed according to (\ref{proposal strategy})
and associated with $k$, satisfies:
\[
\limsup_{n\to\infty}\max_{(b^{1},...,b^{h})\in\mathcal{B}^{m\times h}}\big(W_{n}\big(b_{n}^{h}\big)-W_{n}\big(b_{n}\big)\big)\leq0.
\]
\end{cor}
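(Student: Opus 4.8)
The plan is to prove the cumulative-wealth inequality (Part~1) by exploiting the product factorization established in the proof of Theorem~\ref{Theorem 1}, and then to obtain the asymptotic statement (Part~2) as an immediate consequence of Part~1 combined with the sublinear bound of Theorem~\ref{Theorem 1}. First I would set $q\coloneqq k/h$, which is a positive integer since $k\equiv0\mod h$, and recall that for any cycle length the best cyclic constant wealth factorizes over residue classes:
\[
\max_{(b^1,\dots,b^k)\in\mathcal{B}^{m\times k}} S_n(b_n^k) = \prod_{i=1}^{\min\{k,n\}} \max_{b\in\mathcal{B}^m} \prod_{t\geq0:\,kt+i\leq n} \langle b, x_{kt+i}\rangle,
\]
and analogously for $h$. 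The conceptual heart of the argument is that the partition of time indices into residue classes modulo $k$ refines the partition modulo $h$: since $k=hq$, the indices congruent to $i$ modulo $h$ split exactly into the $q$ classes congruent to $i,\,i+h,\,\dots,\,i+(q-1)h$ modulo $k$, and as $i$ ranges over $1,\dots,h$ these sub-classes enumerate all of $1,\dots,k$ exactly once.

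Second, I would fix a residue class $i\in\{1,\dots,h\}$ and compare its single $h$-cyclic factor with the product of the $q$ corresponding $k$-cyclic factors. Writing $f_r(b)\coloneqq\prod_{t:\,kt+(i+rh)\leq n}\langle b,x_{kt+i+rh}\rangle\geq0$, the index decomposition identifies the $h$-cyclic factor with $\max_{b}\prod_{r=0}^{q-1}f_r(b)$, whereas the grouped $k$-cyclic factors contribute $\prod_{r=0}^{q-1}\max_b f_r(b)$. Since every $f_r$ is nonnegative, any single maximizer of the product is a feasible choice in each separate maximization, so $\max_b\prod_r f_r(b)\leq\prod_r\max_b f_r(b)$. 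Taking the product over $i=1,\dots,h$ and re-indexing the right-hand side as a product over all $k$ residue classes modulo $k$ then yields $\max_{(b^1,\dots,b^h)} S_n(b_n^h)\leq\max_{(b^1,\dots,b^k)} S_n(b_n^k)$, which is Part~1.

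Third, for Part~2, I would observe that $W_n(b_n)$ does not depend on the maximization variables, so by Part~1
\[
\max_{(b^1,\dots,b^h)}\big(W_n(b_n^h)-W_n(b_n)\big) \leq \max_{(b^1,\dots,b^k)}\big(W_n(b_n^k)-W_n(b_n)\big) \leq \frac{k(m-1)\log(n+1)}{n},
\]
where the last inequality is Theorem~\ref{Theorem 1} divided by $n$ (with the analogous Dirichlet bound in the other case). Letting $n\to\infty$ drives the right-hand side to $0$, giving the claimed $\limsup\leq0$.

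I expect the main obstacle to lie not in the inequality itself but in the index bookkeeping: verifying rigorously that the $q$ residue classes $i+rh\pmod k$ for $r=0,\dots,q-1$ partition the class $i\pmod h$ without overlap or omission, and handling the boundary regime $n<k$ in which some residue classes modulo $k$ are empty and the factorization is truncated at $\min\{k,n\}$ (respectively $\min\{h,n\}$). In that regime empty products are read as $1$ and the refinement inequality degenerates to an equality, so the bound persists; making this uniform in $n$ is the only step requiring genuine care.
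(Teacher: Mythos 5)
Your proposal is correct and follows essentially the same route as the paper: both arguments rest on the observation that the residue classes modulo $k$ refine those modulo $h$, so that each $h$-cyclic factor splits into $q=k/h$ sub-blocks each of which can be maximized separately (the paper phrases this as bounding a generic tuple $(b^1,\dots,b^h)$ term by term, which is the same ``max of a product $\leq$ product of maxes'' step you use), and both obtain the asymptotic claim by combining this with the sublinear bound of Theorem~\ref{Theorem 1}. Your extra care about the truncated regime $n<k$ and empty residue classes is a point the paper glosses over, but it does not change the argument.
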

\begin{proof}
Since $k\equiv0\mod h$, let $d=k/h$ denote the integer quotient.
Then,
\begin{align*}
S_{n}\big(b_{n}^{h}\big)=\prod_{i=1}^{h}{\displaystyle \prod_{t\geq0:\,ht+i\leq n}}\big<b^{i},x_{ht+i}\big>= & \prod_{i=1}^{h}{\displaystyle \prod_{j=0}^{d-1}}{\displaystyle \Big(\prod_{t\geq0:\,kt+i+jh\leq n}}\big<b^{i},x_{kt+i+jh}\big>\Big)\\
\leq & \prod_{i=1}^{h}{\displaystyle \prod_{j=0}^{d-1}}\max_{b\in\mathcal{B}^{m}}{\displaystyle \Big(\prod_{t\geq0:\,kt+i+jh\leq n}}\big<b,x_{kt+i+jh}\big>\Big)\\
= & \prod_{i=1}^{k}\max_{b\in\mathcal{B}^{m}}{\displaystyle \prod_{t\geq0:\,kt+i\leq n}}\big<b,x_{kt+i}\big>\\
= & \max_{(b^{1},...,b^{k})\in\mathcal{B}^{m\times k}}S_{n}\big(b_{n}^{k}\big).
\end{align*}
Thus, the remaining inequality immediately follows using Theorem \ref{Theorem 1},
completing the proof.
\end{proof}

\section{Generalized Kelly criterion for markets with serial dependence}

This section focuses on establishing the possible asymptotic optimality
in growth rate for the $k$-cyclic constant strategy, which will be
shown as a generalization of the classical Kelly criterion. Recall
that the Kelly criterion is widely known as the foundation for deriving
an optimal strategy in the i.i.d. market, specifically a constant
strategy that achieves the highest asymptotic growth rate among all
possible dynamic strategies. Such asymptotic optimality was originally
demonstrated in \citet{Breiman1960,Breiman1961} as an adaptation
to the stock market of the betting framework introduced in the classical
paper by \citet{Kelly1956}. Moreover, due to Corollary \ref{corollary 1},
there always exists an optimal $k$-cyclic constant strategy, for
any $k\geq2$, that attains the highest asymptotic growth rate, similar
to the optimal constant strategy under the Kelly criterion. This suggests
that the i.i.d. market is merely a special case of a more general
class of stochastic markets in which the highest asymptotic growth
rate among all dynamic strategies can be achieved by the optimal $k$-cyclic
constant strategy. Since this hypothetical stochastic market is broader
than the i.i.d. case, it is expected to allow for serial dependence
in the assets' returns.\smallskip

For convenience, let us first assume that all expected values under
consideration are well defined henceforth. In accordance with the
Kelly criterion, the desired optimal constant strategy corresponds
to the portfolio that maximizes the expected logarithmic return over
$\mathcal{B}^{m}$, that is, $\max_{b\in\mathcal{B}^{m}}\mathbb{E}\big(\log\big<b,X\big>\big)$,
where $X$ represents the random vector of assets' returns. Such an
optimal portfolio is commonly referred to as the \emph{log-optimal
portfolio} in the literature, with respect to the known distribution
of the random vector $X$. Accordingly, Definition \ref{k-log-optimal}
generalizes the concept of the log-optimal portfolio to the case involving
the joint distribution of multiple random vectors.
\begin{defn}
Consider $k\geq2$ random vectors of assets' returns $X^{i}$, $i\in\big\{1,...,k\big\}$,
distributed according to a joint law. Then, $k$-log-optimal portfolios
with respect to that joint distribution are defined as the optimal
tuple of $k$ portfolios, denoted by $\big(b^{1*},...,b^{k*}\big)$,
among all tuples $\big(b^{1},...,b^{k}\big)\in\mathcal{B}^{m\times k}$,
such that the following condition holds:\label{k-log-optimal}
\[
\mathbb{E}\big(\sum_{i=1}^{k}\log\big<b^{i*},X^{i}\big>\big)=\max_{(b^{1},...,b^{k})\in\mathcal{B}^{m\times k}}\mathbb{E}\big(\sum_{i=1}^{k}\log\big<b^{i},X^{i}\big>\big).\smallskip
\]
\end{defn}
\begin{rem*}
The problem of finding the $k$-log-optimal portfolios reduces to
solving $k$ separate maximization problems of expected logarithmic
portfolio returns, i.e., finding $k$ different log-optimal portfolios,
only when the $k$ random vectors of assets' returns $X^{i}$, with
$i\in\big\{1,...,k\big\}$, are independently distributed. In contrast,
as demonstrated in the proof of Theorem \ref{Theorem 1}, the best
$k$-cyclic constant strategy at any time $n\geq k$ can always be
obtained by separately identifying $k$ different best constant strategies,
regardless of the dependence structure among the $k$ variables.\smallskip
\end{rem*}
Let us note that the log-optimal portfolio not only maximizes the
expected logarithmic return but also maximizes the expected portfolio
return relative. In detail, the works of \citet{Bell1980,Algoet1988}
show that the Kuhn-Tucker condition for the optimality of the log-optimal
portfolio, denoted by $b^{*}$, implies the following equivalence:
\[
\mathbb{E}\Big(\log\frac{\big<b,X\big>}{\big<b^{*},X\big>}\Big)\leq0,\,\forall b\in\mathcal{B}^{m}\text{ if and only if }\mathbb{E}\Big(\frac{\big<b,X\big>}{\big<b^{*},X\big>}\Big)\leq1,\,\forall b\in\mathcal{B}^{m}.
\]
This equivalence is essential for demonstrating the asymptotic optimality
in growth rate of the constant strategy $\big(b^{*}\big)$. In a similar
manner, the Kuhn-Tucker condition is extended to the $k$-log-optimal
portfolios in Lemma \ref{Lemma 2}, which affirms that the $k$-log-optimal
portfolios also maximize the expected geometric mean of the $k$ portfolio
return relatives. This result will be subsequently used to characterize
the asymptotic optimality in growth rate of the $k$-cyclic constant
strategy defined corresponding to the $k$-log-optimal portfolios. 
\begin{lem}
\emph{(Kuhn-Tucker condition for the $k$-log-optimal portfolios)}
Consider $k$ random vectors of assets' returns $X^{i}$, $i\in\big\{1,...,k\big\}$.
Then:\label{Lemma 2}
\begin{enumerate}
\item The expectation $\mathbb{E}\big(\sum_{i=1}^{k}\log\big<b^{i},X^{i}\big>\big)$
with respect to any joint distribution of $\big(X^{1},...,X^{k}\big)$
is concave in $\big(b^{1},...,b^{k}\big)$ over $\mathcal{B}^{m\times k}.$
\item The $k$-log-optimal portfolios corresponding to any joint distribution
of $\big(X^{1},...,X^{k}\big)$ satisfy the following necessary condition:
\[
\mathbb{E}\bigg(\prod_{i=1}^{k}\sqrt[k]{\frac{\big<b^{i},X^{i}\big>}{\big<b^{i*},X^{i}\big>}}\bigg)\leq1,\,\forall\big(b^{1},...,b^{k}\big)\in\mathcal{B}^{m\times k}.
\]
\end{enumerate}
\end{lem}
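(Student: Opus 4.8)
The plan is to handle the two assertions in turn, using the concavity of part (a) as the analytic backbone for the necessary condition in part (b).

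For the concavity assertion, I would argue pointwise and then pass to the expectation. For each fixed realization, the map $b^i \mapsto \log\langle b^i, X^i\rangle$ is concave, being the logarithm composed with the linear functional $b^i \mapsto \langle b^i, X^i\rangle$, which is strictly positive on $\mathcal{B}^m$ because $X^i \in \mathbb{R}_{++}^m$. Since the $k$ summands act on disjoint blocks of the variable $(b^1,\dots,b^k)$, their sum is jointly concave on $\mathcal{B}^{m\times k}$. Concavity survives the expectation: applying the defining inequality pointwise for $\theta \in [0,1]$ and then integrating (a monotone operation) yields concavity of $\mathbb{E}(\sum_{i=1}^{k}\log\langle b^i, X^i\rangle)$. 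This part is routine.

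For the necessary condition, I would extract a first-order optimality condition by a variational argument along a segment. Fix an arbitrary tuple $(b^1,\dots,b^k) \in \mathcal{B}^{m\times k}$ and set $F(\lambda) := \mathbb{E}(\sum_{i=1}^{k}\log\langle b^{i*} + \lambda(b^i - b^{i*}), X^i\rangle)$ for $\lambda \in [0,1]$, noting each convex combination remains in $\mathcal{B}^m$. Since $(b^{1*},\dots,b^{k*})$ is a maximizer, $F(\lambda) \leq F(0)$ for all $\lambda$, so the right derivative obeys $F'(0^+) \leq 0$. Differentiating under the expectation gives $F'(0^+) = \sum_{i=1}^{k}\mathbb{E}(\langle b^i - b^{i*}, X^i\rangle / \langle b^{i*}, X^i\rangle) = \sum_{i=1}^{k}(\mathbb{E}(\langle b^i, X^i\rangle/\langle b^{i*}, X^i\rangle) - 1)$, whence $\sum_{i=1}^{k}\mathbb{E}(\langle b^i, X^i\rangle / \langle b^{i*}, X^i\rangle) \leq k$. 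The interchange of derivative and expectation is the step I expect to be the main obstacle, and it is exactly where the concavity pays off: for each $i$, concavity of $\lambda \mapsto \log\langle b^{i*}+\lambda(b^i-b^{i*}),X^i\rangle$ makes the difference quotient $\lambda^{-1}(\log\langle b^{i*}+\lambda(b^i-b^{i*}),X^i\rangle - \log\langle b^{i*},X^i\rangle)$ nonincreasing in $\lambda$, so it increases to the finite limit $\langle b^i - b^{i*}, X^i\rangle/\langle b^{i*}, X^i\rangle$ as $\lambda \downarrow 0$ (the denominator being strictly positive). Subtracting the integrable value at $\lambda = 1$ to obtain a nonnegative increasing family, the monotone convergence theorem licenses the interchange, under the standing assumption that the relevant expectations are well defined; this is the same mechanism behind the classical single-asset Kuhn--Tucker equivalence recalled before the lemma.

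Finally, I would combine the constraint by the arithmetic--geometric mean inequality. Writing $r_i := \langle b^i, X^i\rangle/\langle b^{i*}, X^i\rangle \geq 0$, the pointwise bound $\prod_{i=1}^{k} r_i^{1/k} \leq \frac{1}{k}\sum_{i=1}^{k} r_i$ gives, after taking expectations and invoking the constraint just derived, $\mathbb{E}(\prod_{i=1}^{k} r_i^{1/k}) \leq \frac{1}{k}\sum_{i=1}^{k}\mathbb{E}(r_i) \leq 1$, which is precisely the asserted condition, valid for the arbitrary tuple $(b^1,\dots,b^k)$. I note that perturbing one coordinate at a time would instead yield the stronger marginal conditions $\mathbb{E}(r_i) \leq 1$ for each $i$; either route feeds the same AM--GM step, but the single joint perturbation is cleaner and avoids any appeal to decoupling of the optimization.
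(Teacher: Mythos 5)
Your proposal is correct and follows essentially the same route as the paper: concavity pointwise then in expectation for part (a), a directional-derivative (first-order optimality) argument along the segment from $\big(b^{1*},\dots,b^{k*}\big)$ to $\big(b^{1},\dots,b^{k}\big)$ giving $\sum_{i=1}^{k}\mathbb{E}\big(\langle b^{i},X^{i}\rangle/\langle b^{i*},X^{i}\rangle\big)\leq k$, and the AM--GM inequality to conclude. The only difference is technical: you justify the derivative--expectation interchange by monotone convergence using the monotonicity of concave difference quotients, whereas the paper uses dominated convergence via the explicit bounds $U/(1+\lambda U)\leq\lambda^{-1}\log(1+\lambda U)\leq U$; both are standard and valid under the paper's standing integrability assumption.
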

\begin{proof}
Since the function $\log\big<b,X^{i}\big>$ is concave in $b\in\mathcal{B}^{m}$
for each $X^{i}$ with $i\in\left\{ 1,...,k\right\} $, the summation
$\sum_{i=1}^{k}\log\big<b^{i},X^{i}\big>$ is concave in $\big(b^{1},...,b^{k}\big)\in\mathcal{B}^{m\times k}$.
Hence, for any two tuples of portfolios $\big(\hat{b}^{1},...,\hat{b}^{k}\big),\,\big(\bar{b}^{1},...,\bar{b}^{k}\big)\in\mathcal{B}^{m\times k}$,
the following inequality holds:
\[
\mathbb{E}\big(\sum_{i=1}^{k}\log\big<\lambda\hat{b}^{i}+\big(1-\lambda\big)\bar{b}^{i},X^{i}\big>\big)\geq\lambda\mathbb{E}\big(\sum_{i=1}^{k}\log\big<\hat{b}^{i},X^{i}\big>\big)+\big(1-\lambda\big)\mathbb{E}\big(\sum_{i=1}^{k}\log\big<\bar{b}^{i},X^{i}\big>\big),\forall\lambda\in\left[0,1\right],
\]
where the expectation is taken with respect to the joint distribution
of $\big(X^{1},...,X^{k}\big)$. This inequality confirms the concavity
of the function $\mathbb{E}\big(\sum_{i=1}^{k}\log\big<b^{i},X^{i}\big>\big)$
in $\big(b^{1},...,b^{k}\big)\in\mathcal{B}^{m\times k}$, as required
for assertion (a).\smallskip

To prove assertion (b), consider the $k$-log-optimal portfolios $\big(b^{1*},...,b^{k*}\big)$
and any other competing tuple of portfolios $\big(b^{1},...,b^{k}\big)$.
Then, for any $\lambda\in\left[0,1\right]$, define the convex combination:
\[
\big(b^{1,\lambda},...,b^{k,\lambda}\big)\coloneqq\big(1-\ensuremath{\lambda}\big)\big(b^{1*},...,b^{k*}\big)+\lambda\big(b^{1},...,b^{k}\big).
\]
Since the function $\mathbb{E}\big(\sum_{i=1}^{k}\log\big<b^{i},X^{i}\big>\big)$
is concave in $\big(b^{1},...,b^{k}\big)\in\mathcal{B}^{m\times k}$,
according to assertion (a), the Kuhn-Tucker condition implies that
the tuple of portfolios $\big(b^{1*},...,b^{k*}\big)$ is $k$-log-optimal
if and only if the directional derivative of the expected logarithmic
return is non-positive when moving from $\big(b^{1*},...,b^{k*}\big)$
to any other tuple $\big(b^{1},...,b^{k}\big)$ in $\mathcal{B}^{m\times k}$.
Hence, we have:
\begin{align*}
0 & \geq\frac{\dif}{\dif\lambda}\mathbb{E}\big(\sum_{i=1}^{k}\log\big<b^{i,\lambda},X^{i}\big>\big)\Big|_{\lambda=0+}\\
 & =\lim_{\lambda\to0}\frac{1}{\lambda}\mathbb{E}\big(\sum_{i=1}^{k}\log\big<b^{i,\lambda},X^{i}\big>-\sum_{i=1}^{k}\log\big<b^{i,0},X^{i}\big>\big)\\
 & =\lim_{\lambda\to0}\frac{1}{\lambda}\mathbb{E}\Big(\sum_{i=1}^{k}\log\Big(\frac{\big(1-\ensuremath{\lambda}\big)\big<b^{i*},X^{i}\big>+\lambda\big<b^{i},X^{i}\big>}{\big<b^{i*},X^{i}\big>}\Big)\Big)\\
 & =\lim_{\lambda\to0}\frac{1}{\lambda}\mathbb{E}\Big(\sum_{i=1}^{k}\log\Big(1+\lambda\Big(\frac{\big<b^{i},X^{i}\big>}{\big<b^{i*},X^{i}\big>}-1\Big)\Big)\Big)\\
 & =\mathbb{E}\Big(\sum_{i=1}^{k}\frac{\big<b^{i},X^{i}\big>}{\big<b^{i*},X^{i}\big>}\Big)-k,
\end{align*}
where the last equality follows from the limit (\ref{limit interchange})
established in the subproof below. Subsequently, by further applying
the AM-GM inequality, which holds pointwise for any outcome of the
$k$ variables $\big(X^{1},...,X^{k}\big)$, the above inequality
yields:
\[
\mathbb{E}\bigg(\prod_{i=1}^{k}\sqrt[k]{\frac{\big<b^{i},X^{i}\big>}{\big<b^{i*},X^{i}\big>}}\bigg)\leq\frac{1}{k}\mathbb{E}\Big(\sum_{i=1}^{k}\frac{\big<b^{i},X^{i}\big>}{\big<b^{i*},X^{i}\big>}\Big)\leq1.
\]
This is the necessary condition for the $k$-log-optimal portfolios
$\big(b^{1*},...,b^{k*}\big)$.\smallskip

(\emph{Subproof for the limit of expected value}). Recall the basic
upper bound $\log\big(1+Z\big)\leq Z$ for $Z>-1$. Similarly, by
establishing the upper bound for $\log\big(1-Z/\big(1+Z\big)\big)$
with $Z>0$, we obtain the following lower bound for $\log\big(1+Z\big)$:
\[
\log\big(1+Z\big)=-\log\Big(1-\frac{Z}{1+Z}\Big)\geq\frac{Z}{1+Z}.
\]
Hence, defining $U\coloneqq\big(\big<b^{i},X^{i}\big>/\big<b^{i*},X^{i}\big>-1\big)\geq-1$
for $i\in\big\{1,...,k\big\}$ and $\lambda>0$, and applying the
established bounds above, we obtain the following inequalities:
\[
\frac{U}{1+\lambda U}\leq\frac{1}{\lambda}\log\big(1+\lambda U\big)\leq U.
\]
Moreover, since $|U/\big(1+\lambda U\big)|\leq2|U|$ for all $\lambda\leq1/2$
and $\mathbb{E}\big(U\big)<\infty$, it follows from the Lebesgue
dominated convergence theorem, together with the inequalities established
above, that:
\begin{equation}
\lim_{\lambda\to0}\frac{1}{\lambda}\mathbb{E}\big(\log\big(1+\lambda U\big)\big)=\mathbb{\mathbb{E}}\Big(\lim_{\lambda\to0}\frac{U}{1+\lambda U}\Big)=\mathbb{E}\big(U\big),\label{limit interchange}
\end{equation}
which completes the proof of the assertion.
\end{proof}
Since the classical Kelly criterion and its associated log-optimal
portfolio are valid as long as the stochastic market is i.i.d., its
generalized form, namely the $k$-cyclic constant strategy defined
by the $k$-log-optimal portfolios, is expected to attain the optimal
asymptotic growth rate compared to all dynamic strategies in a particular
stochastic market where each collection of $k$ assets' returns variables
is distributed according to an identical joint law. This implies that
the market exhibits correlations between assets' returns at different
periods within a fixed window length or contains hidden patterns that
cause certain typical tendencies in volatility over time. Specifically,
the desired asymptotic optimality is proven under a block-wise i.i.d.
stochastic process, that is, a process consisting of i.i.d. random
matrices, in Theorem \ref{Theorem 2}, thereby extending the classical
Kelly criterion to markets exhibiting serial dependence. Essentially,
a block-wise i.i.d. process is a non-stationary process in the conventional
temporal sense, meaning that it is not independent across time in
the usual way. Instead, it displays a periodic structure with internal
dependencies among the assets' returns within each time block, which
can be viewed as a matrix. The case of persistent day-of-the-week
effects in Example \ref{Example k wise strategy} can be regarded
as an instance of serial dependence within a time block. There exists
a vast and long standing body of literature in finance and economics
on such calendar effects and patterns, including works by \citet{Cross1973,French1980,Harris1986,Pearce1996,Zilca2017,Grebe2024},
among many others.
\begin{thm}
\emph{(Generalized Kelly criterion and asymptotic optimality)} Assume
that the process of assets' returns $\big\{ X_{n}\big\}_{n=1}^{\infty}$
is represented by the block-wise i.i.d. process $\big\{\big(X_{kt+1},X_{kt+2},...,X_{kt+k}\big)\big\}_{t=0}^{\infty}$
according to a joint distribution of the random variables $\big(X_{1},X_{2},...,X_{k}\big)$,
with $k\geq2$. Consider the $k$-cyclic constant strategy $\big(b_{n}^{k*}\big)$,
defined by the $k$-log-optimal portfolios $\big(b^{1*},...,b^{k*}\big)$
with respect to this joint distribution, and any other competing strategy
$\big(b_{n}\big)$. Then, the following holds:\label{Theorem 2} 
\[
\limsup_{t\to\infty}\big(W_{kt+k}\big(b_{kt+k}\big)-W_{kt+k}\big(b_{kt+k}^{k*}\big)\big)\leq0,\text{ \emph{a.s.}}
\]
Consequently, the highest asymptotic growth rate attainable by a dynamic
strategy is:
\[
\lim_{t\to\infty}W_{kt+k}\big(b_{kt+k}^{k*}\big)=\max_{(b^{1},...,b^{k})\in\mathcal{B}^{m\times k}}\frac{1}{k}\mathbb{E}\big(\sum_{i=1}^{k}\log\big<b^{i},X_{i}\big>\big),\text{ \emph{a.s.}}
\]
\end{thm}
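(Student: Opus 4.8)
The plan is to lift Breiman's classical argument for the log-optimal portfolio to the block level, substituting the geometric-mean Kuhn--Tucker condition of Lemma \ref{Lemma 2}(b) for the scalar one. First I would rewrite the difference of growth rates block by block. Writing $b_{ks+i}^{k*}=b^{i*}$ and grouping the $kt+k$ factors into the $t+1$ blocks indexed by $s=0,\dots,t$, we get
\[
W_{kt+k}\big(b_{kt+k}\big)-W_{kt+k}\big(b_{kt+k}^{k*}\big)=\frac{1}{k(t+1)}\sum_{s=0}^{t}V_s,\quad V_s\coloneqq\sum_{i=1}^{k}\log\frac{\langle b_{ks+i},X_{ks+i}\rangle}{\langle b^{i*},X_{ks+i}\rangle}.
\]
Hence the first claim is equivalent to $\limsup_{t}\frac{1}{t+1}\log\tilde T_t\le0$ almost surely, where $\tilde T_t\coloneqq\prod_{s=0}^{t}e^{V_s/k}=\big(S_{kt+k}(b_{kt+k})/S_{kt+k}(b_{kt+k}^{k*})\big)^{1/k}$. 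The $1/k$ power is introduced precisely so that the per-block factor becomes a \emph{geometric} mean, matching the form of Lemma \ref{Lemma 2}(b).

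The heart of the argument is to show $\mathbb{E}(\tilde T_t)\le1$ for every $t$ (indeed that $\{\tilde T_t\}$ is a nonnegative supermartingale) with respect to the filtration $\mathcal{F}_{ks}=\sigma(X_1^{ks})$ generated by the past blocks. Set $G_s\coloneqq e^{V_s/k}=\prod_{i=1}^{k}\big(\langle b_{ks+i},X_{ks+i}\rangle/\langle b^{i*},X_{ks+i}\rangle\big)^{1/k}$, so that $\tilde T_t=\prod_{s=0}^{t}G_s$. Because the blocks are i.i.d., conditionally on $\mathcal{F}_{ks}$ the block $(X_{ks+1},\dots,X_{ks+k})$ is distributed as $(X_1,\dots,X_k)$ and independent of the past; if the competing portfolios $b_{ks+1},\dots,b_{ks+k}$ depend only on past complete blocks, they are $\mathcal{F}_{ks}$-measurable, so conditionally they form a deterministic tuple and Lemma \ref{Lemma 2}(b) yields $\mathbb{E}(G_s\mid\mathcal{F}_{ks})\le1$. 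Since $\tilde T_{t-1}$ is $\mathcal{F}_{kt}$-measurable and nonnegative, telescoping gives $\mathbb{E}(\tilde T_t)=\mathbb{E}\big(\tilde T_{t-1}\,\mathbb{E}(G_t\mid\mathcal{F}_{kt})\big)\le\mathbb{E}(\tilde T_{t-1})\le\cdots\le1$.

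From $\mathbb{E}(\tilde T_t)\le1$ the almost-sure conclusion follows by the Markov--Borel--Cantelli device: $P(\tilde T_t>(t+1)^2)\le(t+1)^{-2}$ is summable, so a.s. $\tilde T_t\le(t+1)^2$ for all large $t$, whence $\frac{1}{t+1}\log\tilde T_t\le\frac{2\log(t+1)}{t+1}\to0$, giving the first displayed inequality. (Equivalently, the nonnegative supermartingale $\tilde T_t$ converges a.s. to a finite limit, hence is a.s. bounded in $t$, yielding the same bound.) For the ``consequently'' part, the first inequality shows no strategy can asymptotically exceed $\big(b_n^{k*}\big)$, while the rate of $\big(b_n^{k*}\big)$ itself is computed directly: the per-block terms $\sum_{i=1}^{k}\log\langle b^{i*},X_{ks+i}\rangle$ are i.i.d. with finite mean (standing integrability assumption), so the strong law of large numbers gives $W_{kt+k}(b_{kt+k}^{k*})\to\frac{1}{k}\mathbb{E}\big(\sum_{i=1}^{k}\log\langle b^{i*},X_i\rangle\big)$ a.s., which by Definition \ref{k-log-optimal} equals the claimed maximum.

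The step I expect to be the main obstacle is the conditional bound $\mathbb{E}(G_s\mid\mathcal{F}_{ks})\le1$. Lemma \ref{Lemma 2}(b) is stated for a \emph{deterministic} tuple $(b^1,\dots,b^k)$, whereas a fully causal competing strategy may select $b_{ks+i}$ using the within-block returns $X_{ks+1},\dots,X_{ks+i-1}$, so the conditioned tuple need not be $\mathcal{F}_{ks}$-measurable; indeed, when there is within-block dependence an adaptive strategy can raise each positional expected log-return by Jensen's inequality. Justifying the bound in full generality therefore requires either restricting the competing class to strategies whose block portfolios depend only on past complete blocks, or upgrading Lemma \ref{Lemma 2}(b) to an adapted conditional version. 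It is exactly here that the geometric-mean form of the Kuhn--Tucker condition, rather than the scalar one, is indispensable, since $V_s$ couples the $k$ positions multiplicatively within each block.
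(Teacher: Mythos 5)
Your proposal follows essentially the same route as the paper's proof: take the $k$-th root of the wealth ratio so that each block contributes the geometric-mean factor controlled by Lemma \ref{Lemma 2}(b), deduce that the expectation of the rooted ratio is at most one, apply Markov's inequality and Borel--Cantelli for the first claim, and invoke the strong law of large numbers on the i.i.d.\ block sums together with Definition \ref{k-log-optimal} for the second. Your version is in fact more careful than the paper's: the paper simply writes the expectation of the product over blocks as the product of expectations, an identity that already presupposes exactly the conditioning and supermartingale telescoping you make explicit with the filtration $\mathcal{F}_{ks}$.

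The obstacle you flag at the end is genuine, and it is not resolved in the paper's proof either. Under the paper's definition of a strategy, a competing $b_{ks+i}$ may depend on $x_{1}^{ks+i-1}$, hence on the within-block returns $X_{ks+1},\dots,X_{ks+i-1}$; the conditioned tuple is then not $\mathcal{F}_{ks}$-measurable and Lemma \ref{Lemma 2}(b), stated for deterministic tuples, does not give $\mathbb{E}(G_s\mid\mathcal{F}_{ks})\leq1$. Worse, the first assertion of the theorem can actually fail for such strategies. Take $m=k=2$, let $X_{1}$ be uniform on $\big\{(2,1),(1,2)\big\}$ and let $X_{2}$ deterministically equal the other outcome. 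Since the objective in Definition \ref{k-log-optimal} separates by linearity of expectation, $b^{1*}=b^{2*}=(1/2,1/2)$ and the optimal strategy earns $2\log(3/2)$ per block; but the causal strategy that plays $(1/2,1/2)$ in position $1$ and then, having observed $X_{2t+1}$, places all wealth on the asset known to return $2$ earns $\log(3/2)+\log2$ per block, so $W_{2t+2}\big(b_{2t+2}\big)-W_{2t+2}\big(b_{2t+2}^{2*}\big)\to\frac{1}{2}\log(4/3)>0$ almost surely. Your proposed fix is therefore not optional: the claim holds only if the competing strategies' portfolios within block $s$ are restricted to be $\sigma\big(X_{1}^{ks}\big)$-measurable (or Lemma \ref{Lemma 2}(b) is upgraded to a conditional version accounting for within-block information). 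With that restriction in place, your argument is complete and correct.
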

\begin{proof}
Since the block-wise process $\big\{\big(X_{kt+1},X_{kt+2},...,X_{kt+k}\big)\big\}_{t=0}^{\infty}$
is i.i.d., applying Lemma \ref{Lemma 2} results in the following
inequality:
\[
\mathbb{E}\bigg(\sqrt[k]{\frac{S_{kt+k}\big(b_{kt+k}\big)}{S_{kt+k}\big(b_{kt+k}^{k*}\big)}}\bigg)=\prod_{j=0}^{t}\mathbb{E}\bigg(\prod_{i=1}^{k}\sqrt[k]{\frac{\big<b_{kj+i},X_{kj+i}\big>}{\big<b^{i*},X_{kj+i}\big>}}\bigg)\leq1,\text{ }\forall t\geq0.
\]
Then, by applying the Markov inequality, we further have that:
\begin{align*}
\mathbb{P}\bigg(\sqrt[k]{\frac{S_{kt+k}\big(b_{kt+k}\big)}{S_{kt+k}\big(b_{kt+k}^{k*}\big)}}>t^{2}\bigg) & \leq\frac{1}{t^{2}}\\
\Rightarrow\,\,\,\mathbb{P}\Big(\frac{1}{k\big(t+1\big)}\log\frac{S_{kt+k}\big(b_{kt+k}\big)}{S_{kt+k}\big(b_{kt+k}^{k*}\big)}>\frac{2}{t+1}\log t\Big) & \leq\frac{1}{t^{2}}\\
\Rightarrow\sum_{t=1}^{\infty}\mathbb{P}\Big(\frac{1}{kt+k}\log\frac{S_{kt+k}\big(b_{kt+k}\big)}{S_{kt+k}\big(b_{kt+k}^{k*}\big)}>\frac{2}{t+1}\log t\Big) & \leq\sum_{t=1}^{\infty}\frac{1}{t^{2}}=\frac{\pi^{2}}{6}.
\end{align*}
Here, the last equality follows from the result of the Basel problem.
Finally, by invoking the Borel-Cantelli lemma, we obtain the following:
\[
\mathbb{P}\Big(\frac{1}{kt+k}\log\frac{S_{kt+k}\big(b_{kt+k}\big)}{S_{kt+k}\big(b_{kt+k}^{k*}\big)}>\frac{2}{t+1}\log t,\text{ infinitely often}\Big)=0,
\]
which is equivalent to the desired inequality:
\[
\limsup_{t\to\infty}\big(W_{kt+k}\big(b_{kt+k}\big)-W_{kt+k}\big(b_{kt+k}^{k*}\big)\big)=\limsup_{t\to\infty}\frac{1}{kt+k}\log\frac{S_{kt+k}\big(b_{kt+k}\big)}{S_{kt+k}\big(b_{kt+k}^{k*}\big)}\leq0,\text{ a.s}.
\]
This implies that $\big(b_{n}^{k*}\big)$ sets the highest asymptotic
growth rate among all strategies.\smallskip

Meanwhile, to demonstrate the limit of the asymptotic growth rate
of the strategy $\big(b_{n}^{k*}\big)$, first note that the sequence
$\big\{\big(\big<b^{1*},X_{kt+1}\big>,...,\big<b^{k*},X_{kt+k}\big>\big)\big\}_{t=0}^{\infty}$
is also a block-wise i.i.d. process. Moreover, this block-wise i.i.d.
process can be equivalently represented as a standard i.i.d. process
$\big\{ Z_{n}\big\}_{n=1}^{\infty}$, where $Z_{n}\coloneqq\big(Z_{n,1},...,Z_{n,k}\big)\in\mathbb{R}_{++}^{k}$
with $Z_{n,i}=\big<b^{i*},X_{k(n-1)+i}\big>$ for any $n\geq1$ and
$i\in\big\{1,...,k\big\}$. Then, invoking the law of large numbers
yields the desired limit:
\begin{align*}
\lim_{t\to\infty}W_{kt+k}\big(b_{kt+k}^{k*}\big) & =\lim_{t\to\infty}\frac{1}{k\big(t+1\big)}\sum_{j=0}^{t}\big(\sum_{i=1}^{k}\log\big<b^{i*},X_{kj+i}\big>\big)\\
 & =\frac{1}{k}\mathbb{E}\big(\sum_{i=1}^{k}\log\big<b^{i*},X_{i}\big>\big),\text{ a.s.}\text{ }\\
 & =\max_{(b^{1},...,b^{k})\in\mathcal{B}^{m\times k}}\frac{1}{k}\mathbb{E}\big(\sum_{i=1}^{k}\log\big<b^{i},X_{i}\big>\big),\text{ a.s.,}
\end{align*}
due to the fact that $\big(b^{1*},...,b^{k*}\big)$ are the $k$-log-optimal
portfolios. This completes the proof.
\end{proof}
Let us remark on a subtle difference in asymptotic optimality between
the constant strategy defined by the log-optimal portfolio in the
classical Kelly criterion and the $k$-cyclic constant strategy defined
by the $k$-log-optimal portfolios in the Generalized Kelly criterion.
According to Theorem \ref{Theorem 2} below, $W_{kt+k}\big(b_{kt+k}\big)<W_{kt+k}\big(b_{kt+k}^{k*}\big)+\epsilon$
eventually as $t$ increases to infinity, for any arbitrarily small
$e>0$. This implies that the growth rate of the competing strategy
$\big(b_{n}\big)$ cannot exceed the growth rate of the best $k$-cyclic
constant strategy $\big(b_{n}^{k*}\big)$ by a margin of $\epsilon$
infinitely often, and this is evaluated at periodic steps of size
$k$, rather than the standard unit step as in the case $k=1$, which
corresponds to the classical Kelly criterion. Therefore, the established
optimum rate $\max_{(b^{1},...,b^{k})\in\mathcal{B}^{m\times k}}k^{-1}\mathbb{E}\big(\sum_{i=1}^{k}\log\big<b^{i},X_{i}\big>\big)$
in the theorem does not necessarily mean that the growth rate of the
best one among all dynamic strategies must converge, as is implied
by the asymptotic optimality for the i.i.d. market; rather, the growth
rate must converge to the optimum rate only at the observation time
points $kt+k$ as $t$ increases.\smallskip

In Theorem \ref{Theorem 2}, if the joint distribution of the $k$
random variables of assets' returns $\big(X_{1},...,X_{k}\big)$ is
unknown, then the $k$-cyclic constant strategy corresponding to the
$k$-log-optimal portfolios is also unidentified. However, since the
learning algorithm in (\ref{proposal strategy}) relies only on the
observed realizations of assets' returns, the resulting strategy $\big(b_{n}\big)$
achieves the optimal asymptotic growth rate without requiring knowledge
of the true joint distribution, as demonstrated in Corollary \ref{corollary 2}
below. Moreover, Corollary \ref{corollary 1} provides a useful application
in the case where the actual value of $k$, representing the number
of random variables $\big(X_{1},...,X_{k}\big)$, is also unknown.
As long as the true $k$ lies within a known small set of possibilities
$\big\{ k^{1},...,k^{z}\big\}$, the constructed strategy $\big(b_{n}\big)$,
using the product $\zeta\coloneqq\prod_{i=1}^{z}k^{i}$ as the cycle
length, will still achieve the optimal asymptotic growth rate, similarly
to the $k$-cyclic constant strategy corresponding to the true $k$-log-optimal
portfolios. In particular, if the stochastic market is simply i.i.d.,
then the constructed strategy $\big(b_{n}\big)$, for any $k\geq2$,
yields the same highest asymptotic growth rate as the optimal constant
strategy.\smallskip
\begin{rem*}
In the literature, there exist learning algorithms for constructing
strategies that can attain the optimal asymptotic growth rate in a
stationary ergodic market process $\big\{ X_{n}\big\}_{n=1}^{\infty}$,
without requiring knowledge of the distribution, as proposed in \citet{Algoet1992,Gyorfi2006}.
However, since the block-wise i.i.d. stochastic process $\big\{\big(X_{kt+1},...,X_{kt+k}\big)\big\}_{t=0}^{\infty}$
is non-stationary, these learning algorithms do not apply, whereas
the learning algorithm proposed in (\ref{proposal strategy}) does.
\end{rem*}
\begin{cor}
Under the block-wise i.i.d. process $\big\{\big(X_{kt+1},...,X_{kt+k}\big)\big\}_{t=0}^{\infty}$
of assets' returns, with $k\geq2$, the strategy $\big(b_{n}\big)$
constructed according to (\ref{proposal strategy}), with cycle length
$dk$ for any $d\geq1$, attains the same optimal asymptotic growth
rate as the optimal $k$-cyclic constant strategy $\big(b_{n}^{k*}\big)$,
i.e.,\label{corollary 2}
\[
\lim_{t\to\infty}\big(W_{kt+k}\big(b_{kt+k}\big)-W_{kt+k}\big(b_{kt+k}^{k*}\big)\big)=0,\text{ \emph{a.s.}}
\]
\end{cor}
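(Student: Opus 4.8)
The plan is to prove the two-sided limit by establishing the bounds $\limsup\leq 0$ and $\liminf\geq 0$ separately, since equality to zero requires both, and then combining them on a common almost-sure event.

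For the $\limsup$ direction I would invoke Theorem \ref{Theorem 2} directly. The process $\big\{\big(X_{kt+1},...,X_{kt+k}\big)\big\}_{t=0}^{\infty}$ is block-wise i.i.d.\ with block length $k$, and the constructed strategy $\big(b_{n}\big)$ is merely one particular competing dynamic strategy, so Theorem \ref{Theorem 2} yields
\[
\limsup_{t\to\infty}\big(W_{kt+k}\big(b_{kt+k}\big)-W_{kt+k}\big(b_{kt+k}^{k*}\big)\big)\leq 0,\text{ a.s.}
\]
This direction is essentially free, as the Kelly-optimal $k$-cyclic strategy is asymptotically unbeatable at the periodic time points $kt+k$, regardless of the cycle length $dk$ used to build $\big(b_{n}\big)$.

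For the $\liminf$ direction I would chain together the deterministic regret guarantees of Section~2. First, applying Theorem \ref{Theorem 1} with cycle length $dk$ shows that $\big(b_{n}\big)$ tracks the best $dk$-cyclic constant strategy in hindsight within a vanishing per-period gap: for every realization,
\[
\max_{(b^{1},...,b^{dk})\in\mathcal{B}^{m\times dk}}W_{n}\big(b_{n}^{dk}\big)-W_{n}\big(b_{n}\big)\leq\frac{dk\big(m-1\big)\log\big(n+1\big)}{n}\longrightarrow 0.
\]
Next, since $dk\equiv 0\mod k$, Corollary \ref{corollary 1} guarantees that the best $dk$-cyclic constant strategy dominates the best $k$-cyclic constant strategy in cumulative wealth at every $n$. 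Finally, $\big(b_{n}^{k*}\big)$ is itself a particular $k$-cyclic constant strategy, so $S_{n}\big(b_{n}^{k*}\big)$ is bounded above by the cumulative wealth of the best $k$-cyclic strategy in hindsight. Chaining these three relations at $n=kt+k$ gives
\[
W_{kt+k}\big(b_{kt+k}^{k*}\big)\leq W_{kt+k}\big(b_{kt+k}\big)+\frac{dk\big(m-1\big)\log\big(kt+k+1\big)}{kt+k},
\]
and letting $t\to\infty$ yields $\liminf_{t\to\infty}\big(W_{kt+k}(b_{kt+k})-W_{kt+k}(b_{kt+k}^{k*})\big)\geq 0$, which in fact holds surely for every realization.

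The two bounds combine on the almost-sure event to force the limit to equal zero. The only step needing genuine care is the $\liminf$ direction: it hinges on exploiting the divisibility $dk\equiv 0\mod k$ through Corollary \ref{corollary 1}, and on carefully distinguishing the Kelly-optimal strategy $\big(b_{n}^{k*}\big)$, which is only dominated by, not equal to, the best-in-hindsight $k$-cyclic benchmark, from the best-in-hindsight $dk$-cyclic benchmark that Theorem \ref{Theorem 1} actually controls. No substantial analytic obstacle remains, since the almost-sure optimality and the sublinear regret have already been established; the identical reasoning applies for the Dirichlet $\big(1/2,...,1/2\big)$ density, with the correspondingly larger but still logarithmic regret term.
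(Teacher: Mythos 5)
Your proof is correct and follows essentially the same route as the paper: Theorem \ref{Theorem 2} supplies the $\limsup\leq0$ direction for the causal strategy $\big(b_{n}\big)$, while Theorem \ref{Theorem 1} and Corollary \ref{corollary 1} supply the $\liminf\geq0$ direction through the divisibility $dk\equiv0\bmod k$. The one genuine (and arguably favourable) difference is in how you organize the lower bound: you chain $W_{kt+k}\big(b_{kt+k}^{k*}\big)\leq\max W_{kt+k}\big(b_{kt+k}^{k}\big)\leq\max W_{kt+k}\big(b_{kt+k}^{dk}\big)\leq W_{kt+k}\big(b_{kt+k}\big)+O\big(\log(kt)/kt\big)$ pointwise, so your $\liminf\geq0$ holds surely for every realization; the paper instead first establishes that the best $k$-cyclic strategy in hindsight converges a.s. to the growth rate of $\big(b_{n}^{k*}\big)$, which amounts to applying Theorem \ref{Theorem 2} with the hindsight-optimal (hence non-causal) strategy as the competitor, and only then invokes Corollary \ref{corollary 1}. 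Your version sidesteps that delicate step and requires no probabilistic input beyond the single application of Theorem \ref{Theorem 2} to $\big(b_{n}\big)$ itself.
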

\begin{proof}
By the definition of the best $k$-cyclic constant strategy in hindsight
over time, and the asymptotic optimality of the $k$-cyclic constant
strategy $\big(b_{n}^{k*}\big)$ according to Theorem \ref{Theorem 2},
we have: 
\begin{align*}
0\leq & \liminf_{t\to\infty}\max_{(b^{1},...,b^{k})\in\mathcal{B}^{m\times k}}\big(W_{kt+k}\big(b_{kt+k}^{k}\big)-W_{kt+k}\big(b_{kt+k}^{k*}\big)\big)\\
\leq & \limsup_{t\to\infty}\max_{(b^{1},...,b^{k})\in\mathcal{B}^{m\times k}}\big(W_{kt+k}\big(b_{kt+k}^{k}\big)-W_{kt+k}\big(b_{kt+k}^{k*}\big)\big)\leq0,\text{ a.s.}
\end{align*}
Then, Corollary \ref{corollary 1} implies the desired limit for the
growth rate of the strategy $\big(b_{n}\big)$ constructed with cycle
length $dk$, for any $d\geq1$, and thus completes the proof.
\end{proof}

\section{Numerical experiments}

In this section, numerical experiments are conducted to investigate
the empirical performance of the proposed strategy, constructed according
to (\ref{proposal strategy}), on a real market dataset. This setting
departs significantly from the worst-case scenario of assets' returns
considered in the consistency guarantee of Theorem \ref{Theorem 1}.
The investigation examines the impact of the cycle length on the strategy\textquoteright s
learning ability and its capacity to outperform the best constant
strategy in terms of cumulative wealth. Moreover, the experiments
highlight the strong potential of the best $k$-cyclic constant strategies
in hindsight, justifying their benchmarking role and demonstrating
the benefit of leveraging differences in the characteristics of subsequences
of assets' returns.\smallskip

\textbf{Testing strategies, abbreviations, and computation}. The experiments
are conducted solely with strategies constructed according to the
proposed learning algorithm (\ref{proposal strategy}) using uniform
$\mu(\cdot)$ for all cycle lengths $k\in\big\{1,...,10\big\}$. These
strategies are referred to as $k$\emph{-parallel Universal Portfolio}
strategies ($k$-PUP) and are denoted accordingly by $\big(b_{n}^{k\text{-PUP}}\big)$
throughout this section. It is important to recall that the $1$-PUP
strategy $\big(b_{n}^{1\text{-PUP}}\big)$ coincides exactly with
the Universal Portfolio strategy, and the associated $1$-cyclic constant
strategy is simply the constant strategy, as noted in Definition \ref{k-wise strategy definition}.
Therefore, for convenience and to avoid redundant notation, $\big(b_{n}^{1\text{-PUP}}\big)$
is used to denote the Universal Portfolio strategy, while the $k$-cyclic
constant strategy ($k$-CC), with associated notation $\big(b_{n}^{k}\big)$,
is also used to refer to the constant strategy in the case $k=1$
henceforth. The $k$-PUP strategies are numerically approximated via
Riemann summation over a discretization of the simplex $\mathcal{B}^{m}$
for computing the integrals in (\ref{proposal strategy}). Thus, while
finer discretization grids lead to more accurate approximations, they
also entail significantly higher computational complexity, especially
in high-dimensional simplices, i.e., with a large number of stocks
in the portfolio.\smallskip

\textbf{Dataset and stocks}. To facilitate the investigation of the
asymptotic properties of the strategies of interest, the dataset used
in the experiments covers the daily adjusted closing prices in USD
over a long time horizon of $27$ years, from December 31, 1992, to
December 31, 2019, i.e., $6798$ trading days, provided by the Center
for Research in Security Prices, LLC (CRSP). To reduce the computational
complexity associated with the discretization of the high-dimensional
simplex $\mathcal{B}^{m}$, a small number $m=4$ of blue-chip companies
listed on the NYSE and NASDAQ exchanges are selected. These span the
industries of semiconductors, technology, aerospace, automation, defense,
and financial services: Honeywell International Inc. (HON), The Boeing
Company (BA), Advanced Micro Devices, Inc. (AMD), and JPMorgan Chase
\& Co. (JPM). Figure \ref{figure 1} depicts the evolution of these
stocks over the $27$-year period, with observable impacts from several
critical market events, such as the 1994 bond market crash, the Asian
financial crisis, the dot-com crash, the 2008 global financial crisis,
and the pre-COVID-19 period.
\begin{figure}[H]
\begin{centering}
\includegraphics[scale=0.33]{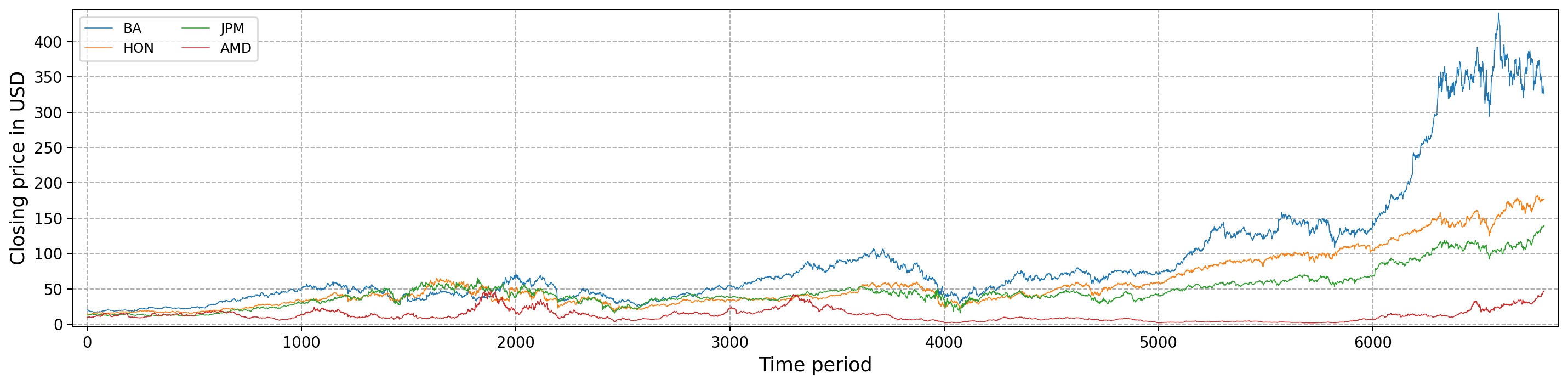}
\par\end{centering}
\caption{Adjusted closing prices in USD of the stocks from December 31, 1992,
to December 31, 2019.\label{figure 1}}
\end{figure}
\smallskip

\textbf{Performance measurement}. The performance of the involved
strategies is primarily presented through several figures to clearly
illustrate their evolution in growth rate and cumulative wealth over
a long-term investment horizon, which are fundamental aspects of the
proposed learning algorithm. Additionally, some other common performance
metrics, including average returns and Sharpe ratios, are also reported
to further assess the impact of the learning algorithm\textquoteright s
decomposition of the assets' returns sequence.

\subsection{Results and analysis on the impact of parameter $\bm{k}$ tuning}

The experimental results affirm the capacity of the $k$-PUP strategies
to eventually outperform the best constant strategy, which serves
as the baseline for comparison, in terms of cumulative wealth. This
improvement benefits from tuning an appropriately higher cycle length
$k$ in the learning algorithm. Specifically, the left graphic of
Figure \ref{figure 2} shows that the cumulative wealth of the best
constant strategy, i.e., the best $1$-CC strategy in hindsight at
the final day $n=6798$, is exceeded by those of the $2$-PUP and
$6$-PUP strategies, with the former significantly outperforming the
latter. Meanwhile, the $4$-PUP and $9$-PUP strategies yield slightly
lower cumulative wealth than the baseline, and the $1$-PUP performs
the worst, with a notably lower value. This outcome aligns with the
strict inequality established in (\ref{UP bound}) for the Universal
Portfolio strategy. Additionally, the performance metrics reported
in Table \ref{table 1} indicate that a $k$-PUP strategy achieving
a higher growth rate also tends to yield a higher average return,
though not necessarily a higher Sharpe ratio. This tendency suggests
that the $k$-PUP learning algorithm capitalizes on differences in
average assets' returns across the decomposed subsequences.\smallskip

In general, the results above indicate that even values of the cycle
length $k$ are better choices, although increasing $k$ does not
necessarily lead to higher cumulative wealth. This behavior is influenced
by the differences in cumulative wealth achieved by the best benchmarking
$k$-CC strategies associated with the corresponding $k$-PUP strategies,
and by their upper bounds established in Theorem \ref{Theorem 1}.
In detail, the right graphic of Figure \ref{figure 2} illustrates
that all the best $k$-CC strategies with $k\geq2$ yield much higher
final cumulative wealth than the best constant strategy, with some
cases achieving up to nearly $80$ times more\footnote{The huge gap in cumulative wealth, up to thousands, between the best
$k$-CC strategies in hindsight is not a surprise, as they result
from the product of cumulative wealths yielded by multiple best constant
strategies defined on $k$ decomposed subsequences of assets' returns,
as shown in the proof of Theorem \ref{Theorem 1}. This implies that
the best $k$-CC strategies would be exceptionally profitable in real-world
trading, yet unfortunately, they can only be known in hindsight.}. Moreover, the final cumulative wealth associated with even values
of $k$ generally increases with $k$, whereas this trend does not
hold for odd values, which is explained by Corollary \ref{corollary 1}.
Since the upper bound for the difference $\max_{(b^{1},...,b^{k})\in\mathcal{B}^{m\times k}}\big(\log S_{n}\big(b_{n}^{k}\big)-\log S_{n}\big(b_{n}^{k\text{-PUP}}\big)\big)$
established in Theorem \ref{Theorem 1} increases with $k$, a $k$-PUP
strategy with large $k$ requires a longer period to close the gap
in growth rate compared to the associated best $k$-CC strategy. Therefore,
when using a small cycle length such as $k=2$, which results in a
sufficiently large number of cycles relative to the total investment
period, we obtain $S_{6798}\big(b_{6798}^{2\text{-PUP}}\big)>\max_{b\in\mathcal{B}^{m}}S_{6798}\big(b\big)$,
as $\max_{(b^{1},b^{2})\in\mathcal{B}^{m\times2}}S_{6798}\big(b_{6798}^{2}\big)\gg\max_{b\in\mathcal{B}^{m}}S_{6798}\big(b\big)$,
consistent with the reported results. Meanwhile, although a larger
cycle length such as $k=6$ implies a slower convergence due to a
greater gap between the $6$-PUP strategy and the best $6$-CC strategy,
the result $S_{6798}\big(b_{6798}^{6\text{-PUP}}\big)>$$\max_{b\in\mathcal{B}^{m}}S_{6798}\big(b\big)$,
albeit marginally, can be attributed to the fact that the extremely
high value of $\max_{(b^{1},...,b^{6})\in\mathcal{B}^{m\times6}}S_{6798}\big(b_{6798}^{6}\big)$
compared to $\max_{b\in\mathcal{B}^{m}}S_{6798}\big(b\big)$ is sufficient
to compensate for this gap.
\begin{figure}[H]
\begin{centering}
\includegraphics[scale=0.33]{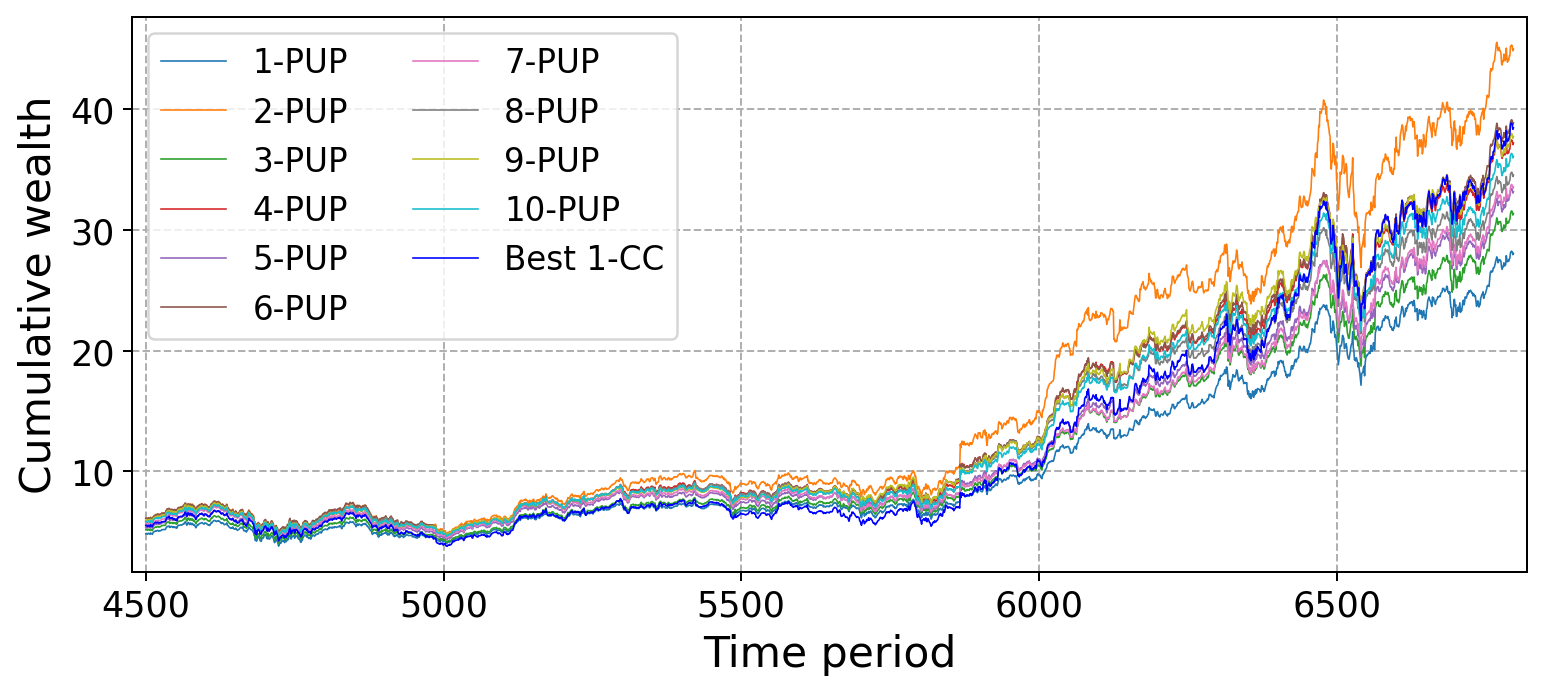}~\includegraphics[scale=0.33]{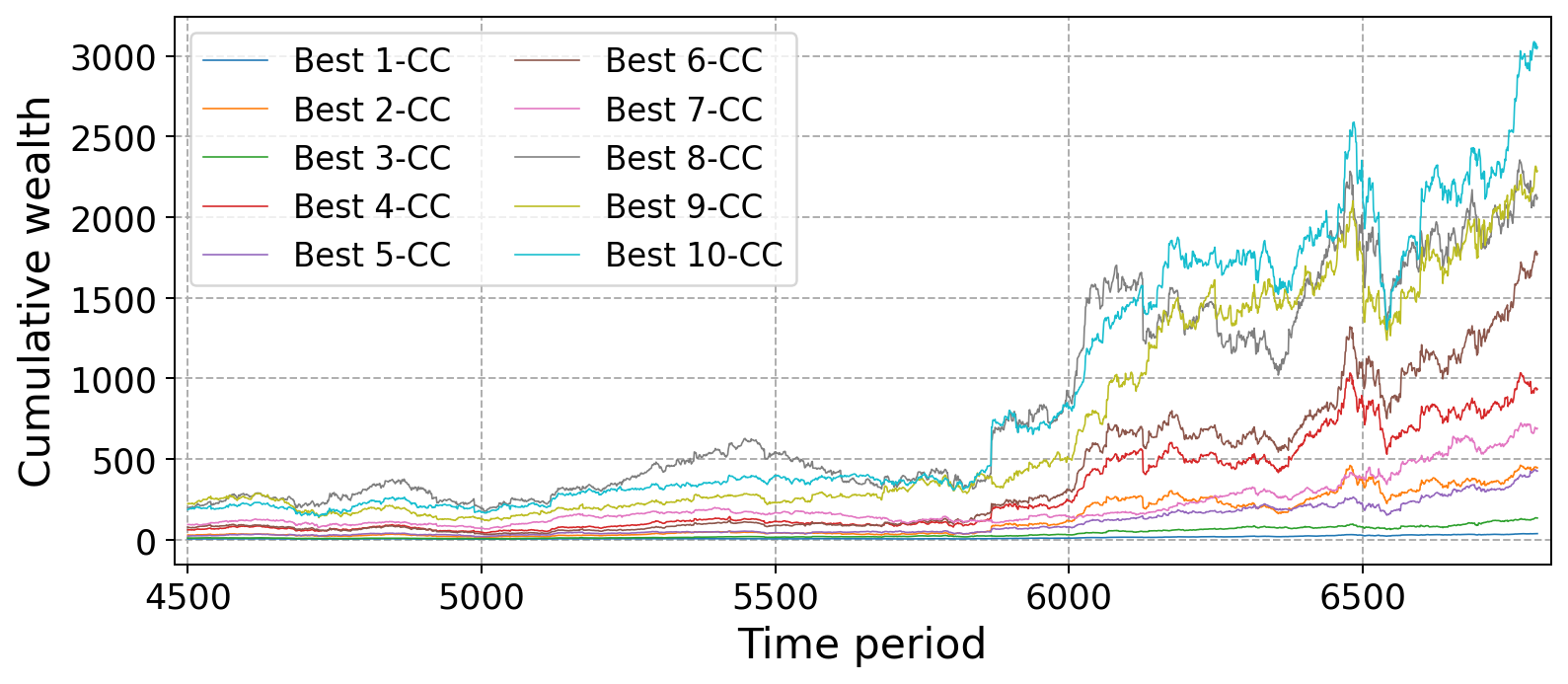}
\par\end{centering}
\caption{Excerpt showing the evolution of the strategies over the final $2299$
days of the investment period, with the best $k$-CC strategies are
determined in hindsight after the end date.\label{figure 2}}
\end{figure}
\medskip

\renewcommand{\arraystretch}{0.3}
\begin{table}[H]
\caption{Performance measures of strategies.\label{table 1}}

\begin{centering}
\begin{tabular*}{12cm}{@{\extracolsep{\fill}}>{\raggedright}m{2.5cm}>{\raggedright}m{2cm}>{\raggedright}m{2cm}>{\raggedright}m{2cm}>{\raggedright}m{2cm}}
\toprule 
{\tiny\textbf{Strategy}} & {\tiny\textbf{Final wealth}} & {\tiny\textbf{Growth rate}} & {\tiny\textbf{Average return}} & {\tiny\textbf{Sharpe ratio}}\tabularnewline
\bottomrule
\end{tabular*}
\par\end{centering}
\begin{centering}
\begin{tabular*}{12cm}{@{\extracolsep{\fill}}>{\raggedright}p{2.5cm}>{\raggedright}p{2cm}>{\raggedright}p{2cm}>{\raggedright}p{2cm}>{\raggedright}p{2cm}}
\toprule 
\multicolumn{5}{c}{{\tiny Referential strategies (not shown in Figure \ref{figure 2})}}\tabularnewline
\midrule
{\tiny Buy and hold on BA} & {\tiny 16.237259} & {\tiny 0.000410} & {\tiny 1.000587} & {\tiny 53.250233}\tabularnewline
{\tiny Buy and hold on HON} & {\tiny 12.261668} & {\tiny 0.000369} & {\tiny 1.000546} & {\tiny 53.153077}\tabularnewline
{\tiny Buy and hold on JPM} & {\tiny 10.827184} & {\tiny 0.000350} & {\tiny 1.000618} & {\tiny 43.049489}\tabularnewline
{\tiny Buy and hold on AMD} & {\tiny 5.060413} & {\tiny 0.000239} & {\tiny 1.001008} & {\tiny 25.447489}\tabularnewline
\end{tabular*}
\par\end{centering}
\begin{centering}
\begin{tabular*}{12cm}{@{\extracolsep{\fill}}>{\raggedright}p{2.5cm}>{\raggedright}p{2cm}>{\raggedright}p{2cm}>{\raggedright}p{2cm}>{\raggedright}p{2cm}}
\toprule 
\multicolumn{5}{c}{{\tiny Benchmarking $k$-CC strategies}}\tabularnewline
\midrule
{\tiny Best $1$-CC} & {\tiny 38.458466} & {\tiny 0.000537} & {\tiny 1.000737} & {\tiny 49.971121}\tabularnewline
{\tiny Best $2$-CC} & {\tiny 446.723698} & {\tiny 0.000898} & {\tiny 1.001370} & {\tiny 32.270519}\tabularnewline
{\tiny Best $3$-CC} & {\tiny 135.029046} & {\tiny 0.000722} & {\tiny 1.001049} & {\tiny 38.934002}\tabularnewline
{\tiny Best $4$-CC} & {\tiny 934.229610} & {\tiny 0.001006} & {\tiny 1.001423} & {\tiny 34.481101}\tabularnewline
{\tiny Best $5$-CC} & {\tiny 426.248099} & {\tiny 0.000891} & {\tiny 1.001243} & {\tiny 37.573436}\tabularnewline
{\tiny Best $6$-CC} & {\tiny 1773.677037} & {\tiny 0.001100} & {\tiny 1.001553} & {\tiny 33.096868}\tabularnewline
{\tiny Best $7$-CC} & {\tiny 690.315751} & {\tiny 0.000962} & {\tiny 1.001251} & {\tiny 41.492901}\tabularnewline
{\tiny Best $8$-CC} & {\tiny 2121.042123} & {\tiny 0.001127} & {\tiny 1.001555} & {\tiny 33.903400}\tabularnewline
{\tiny Best $9$-CC} & {\tiny 2286.811633} & {\tiny 0.001138} & {\tiny 1.001503} & {\tiny 36.886932}\tabularnewline
{\tiny Best $10$-CC} & {\tiny 3054.244982} & {\tiny 0.001180} & {\tiny 1.001515} & {\tiny 38.282879}\tabularnewline
\end{tabular*}
\par\end{centering}
\begin{centering}
\begin{tabular*}{12cm}{@{\extracolsep{\fill}}>{\raggedright}p{2.5cm}>{\raggedright}p{2cm}>{\raggedright}p{2cm}>{\raggedright}p{2cm}>{\raggedright}p{2cm}}
\toprule 
\multicolumn{5}{c}{{\tiny Evaluated $k$-PUP strategies}}\tabularnewline
\midrule
{\tiny$1$-PUP} & {\tiny 28.052022} & {\tiny 0.000490} & {\tiny 1.000651} & {\tiny 55.915769}\tabularnewline
{\tiny$2$-PUP} & {\tiny 44.980156} & {\tiny 0.000560} & {\tiny 1.000729} & {\tiny 54.459225}\tabularnewline
{\tiny 3-PUP} & {\tiny 31.343436} & {\tiny 0.000507} & {\tiny 1.000671} & {\tiny 55.232045}\tabularnewline
{\tiny$4$-PUP} & {\tiny 37.199659} & {\tiny 0.000532} & {\tiny 1.000698} & {\tiny 54.932296}\tabularnewline
{\tiny$5$-PUP} & {\tiny 33.165428} & {\tiny 0.000515} & {\tiny 1.000680} & {\tiny 55.086373}\tabularnewline
{\tiny$6$-PUP} & {\tiny 38.860547} & {\tiny 0.000538} & {\tiny 1.000704} & {\tiny 54.981575}\tabularnewline
{\tiny$7$-PUP} & {\tiny 33.565030} & {\tiny 0.000517} & {\tiny 1.000682} & {\tiny 55.150411}\tabularnewline
{\tiny$8$-PUP} & {\tiny 34.535888} & {\tiny 0.000521} & {\tiny 1.000689} & {\tiny 54.683205}\tabularnewline
{\tiny$9$-PUP} & {\tiny 37.725267} & {\tiny 0.000534} & {\tiny 1.000700} & {\tiny 55.031850}\tabularnewline
{\tiny$10$-PUP} & {\tiny 36.091242} & {\tiny 0.000528} & {\tiny 1.000693} & {\tiny 55.002617}\tabularnewline
\bottomrule
\end{tabular*}
\par\end{centering}
\centering{}%
\begin{minipage}[t]{12cm}%
\begin{spacing}{0.5}
{\tiny\textbf{Note}}{\tiny . The $k$-PUP and $k$-CC strategies are
numerically approximated using $11437$ discretization points with
step size of $0.025$ per simplex $\mathcal{B}^{4}$.}
\end{spacing}
\end{minipage}
\end{table}
\smallskip

\textbf{Convergence of the strategies' growth rates}. To further clarify
the convergence speed of the growth rate of the $k$-PUP strategy
to that of the corresponding best $k$-CC strategy, the left graphic
of Figure \ref{figure 3} illustrates a comparison of the empirical
differences in growth rates over time for the strategies under consideration.
It reflects our expectation that the empirical convergence to zero
of the difference $\max_{(b^{1},...,b^{k})\in\mathcal{B}^{m\times k}}\big(W_{n}\big(b_{n}^{k}\big)-W_{n}\big(b_{n}^{k\text{-PUP}}\big)\big)$
becomes slower as the cycle length $k$ increases, since the upper
bound established in Theorem \ref{Theorem 1} is larger. However,
due to Corollary \ref{corollary 1}, this assertion comes with some
subtle nuance, as the guarantee holds only when the length increases
specifically in multiples of $k$. For instance, the plot clearly
shows that the convergence speed to zero of the growth rate difference
for $k=2$ is slower than that for $k=3$, and a similar convergence
pattern is also observed for the pair $k=4$ and $k=5$.\smallskip

Moreover, although the $1$-PUP strategy yields significantly worse
cumulative wealth over time compared to all other $k$-PUP strategies,
it is important to examine whether their growth rates converge asymptotically.
Recall that, according to Corollary \ref{corollary 1}, the asymptotic
growth rate of the $1$-PUP strategy should serve as a lower bound
for all other $k$-PUP strategies; thus, it is considered a common
reference for comparison. The right graphic of Figure \ref{figure 3}
illustrates that the difference in growth rate $W_{n}\big(b_{n}^{k\text{-PUP}}\big)-W_{n}\big(b_{n}^{1\text{-PUP}}\big)$
does not converge to zero over the entire investment period for any
$k>1$ (note that $W_{n}\big(b_{n}^{k\text{-PUP}}\big)-W_{n}\big(b_{n}^{1\text{-PUP}}\big)$
is much smaller than $\max_{(b^{1},...,b^{k})\in\mathcal{B}^{m\times k}}\big(W_{n}\big(b_{n}^{k}\big)-W_{n}\big(b_{n}^{k\text{-PUP}}\big)\big)$,
since the cumulative wealth $\max_{(b^{1},...,b^{k})\in\mathcal{B}^{m\times k}}S_{n}\big(b_{n}^{k}\big)$
is several times greater than $S_{n}\big(b_{n}^{k\text{-PUP}}\big)$
for all $k$, as shown in Figure \ref{figure 2}). Therefore, the
assets' returns are unlikely to be i.i.d., and the classical Kelly
criterion does not apply in this context, as the growth rate of the
$1$-PUP strategy does not converge to that of the optimal strategy\footnote{To account for the computational error caused by numerical approximation,
let us assume that the process of assets' returns $\big\{ X_{n}\big\}_{n=1}^{\infty}$
is i.i.d., and define $\Delta\coloneqq\max_{b\in\mathcal{B}^{4}}\mathbb{E}\big(\log\big<b,X_{1}\big>\big)-\lim_{n\to\infty}W_{n}\big(b_{n}^{1\text{-PUP}}\big)$,
which represents the error in asymptotic growth rate between the true
optimal constant strategy and the numerically approximated $1$-PUP
strategy, and is very small. Then, all differences $W_{n}\big(b_{n}^{k\text{-PUP}}\big)-W_{n}\big(b_{n}^{1\text{-PUP}}\big)$
for $k\geq2$ must converge to $\Delta$. However, the evolutions
illustrated in the right graphic of Figure \ref{figure 3} also do
not support this hypothesis.}. As an additional remark, the $k$-PUP strategies with $k\in\big\{4,6,8,10\big\}$
grow asymptotically at lower rates than the $2$-PUP strategy, although
they theoretically should not, due to Corollary \ref{corollary 2},
because their convergence speeds to the corresponding best $k$-CC
strategies are slower, as previously discussed. 
\begin{figure}[H]
\begin{centering}
\includegraphics[scale=0.335]{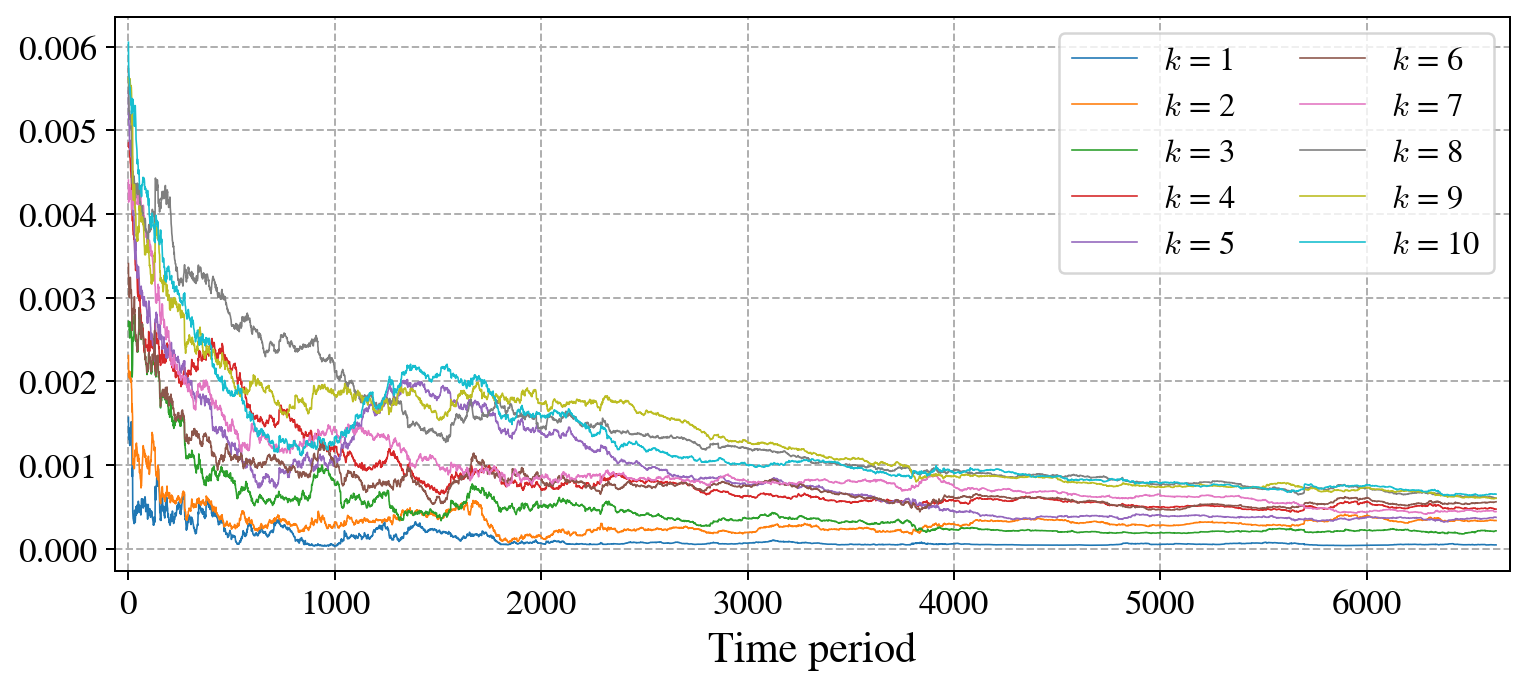}\includegraphics[scale=0.335]{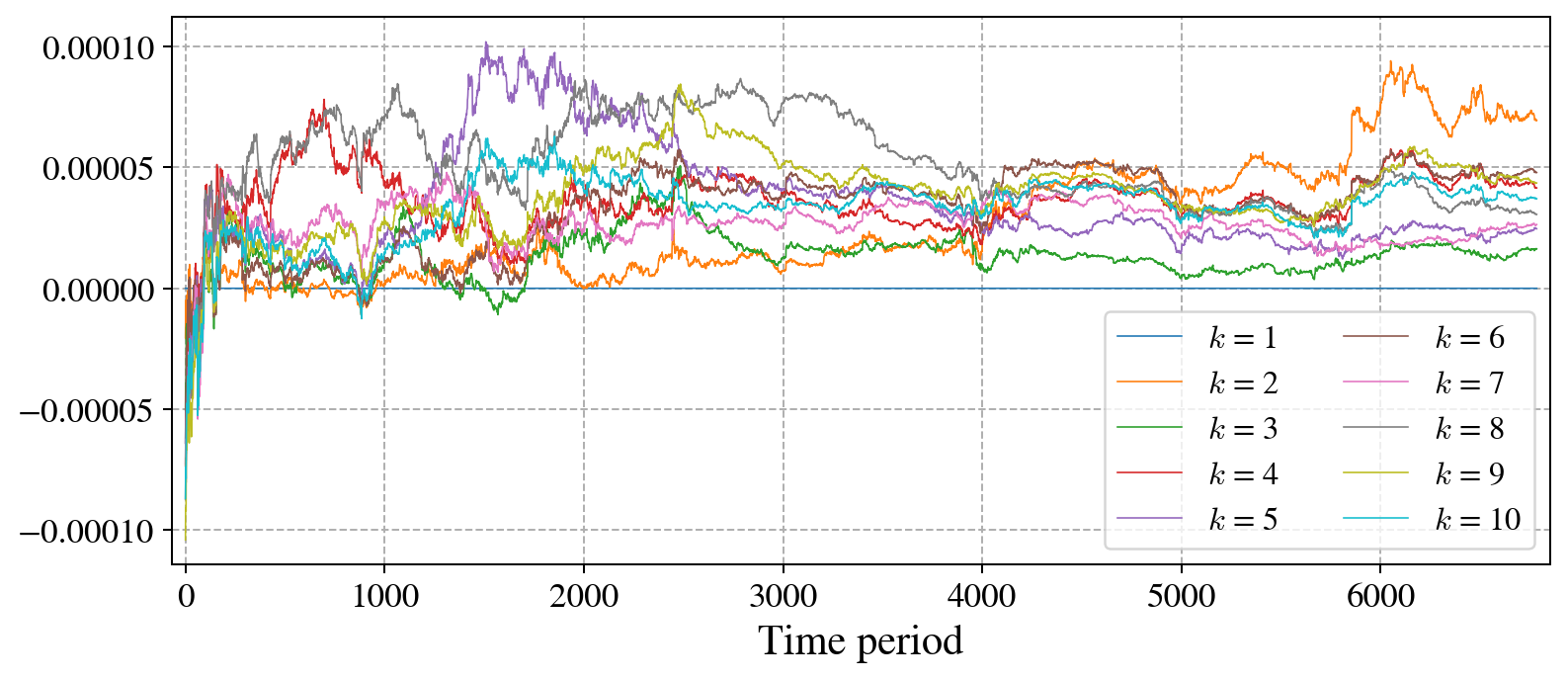}
\par\end{centering}
\caption{Evolutions of the differences $\max_{(b^{1},...,b^{k})\in\mathcal{B}^{m\times k}}\big(W_{n}\big(b_{n}^{k}\big)-W_{n}\big(b_{n}^{k\text{-PUP}}\big)\big)$
(left graphic) and $W_{n}\big(b_{n}^{k\text{-PUP}}\big)-W_{n}\big(b_{n}^{1\text{-PUP}}\big)$
(right graphic) over the investment period.\label{figure 3}}
\end{figure}
\smallskip

\textbf{Discrepancy between the decomposed subsequences of assets'
returns}. Since the $k$ subsequences of assets' returns decomposed
by the algorithm of the $k$-PUP strategy exhibit different statistical
characteristics, the $k$ corresponding best constant strategies defined
on each subsequence yield different averages and variances of returns,
reflecting the optimal utilization of this discrepancy. Indeed, Figure
\ref{figure 4} demonstrates the advantage introduced by sequence
decomposition, which enables substantial improvement compared to using
the original sequence as a whole, i.e., the case $k=1$. The best
$k$-CC strategy, which achieves much higher cumulative wealth when
$k>1$ as shown in Figure \ref{figure 2}, is associated with most
of the corresponding best constant strategies defined on the $k$
subsequences yielding higher averages and variances of returns than
those yielded by the best $1$-CC strategy. In particular, for $k=2$
and $k=6$, all the average returns yielded by the best constant strategies
on the respective subsequences are higher than that for the case $k=1$,
which partly accounts for the exceeding of the $2$-PUP and $6$-PUP
strategies over the best $1$-CC strategy in terms of final cumulative
wealth. Hence, these empirical observations suggest an assessment
that a greater number of decomposed subsequences facilitates a better
improvement in the average returns of the best constant strategies
defined on them.
\begin{figure}[H]
\centering{}\includegraphics[scale=0.36]{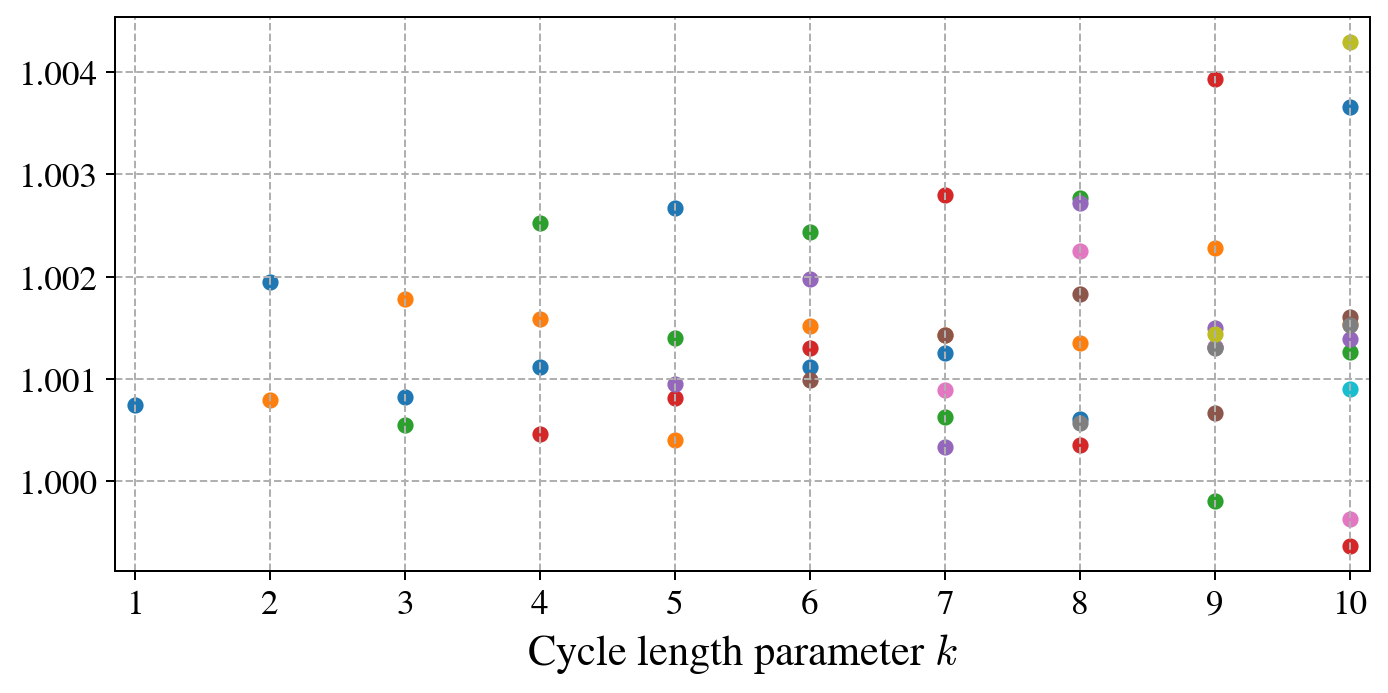}\includegraphics[scale=0.36]{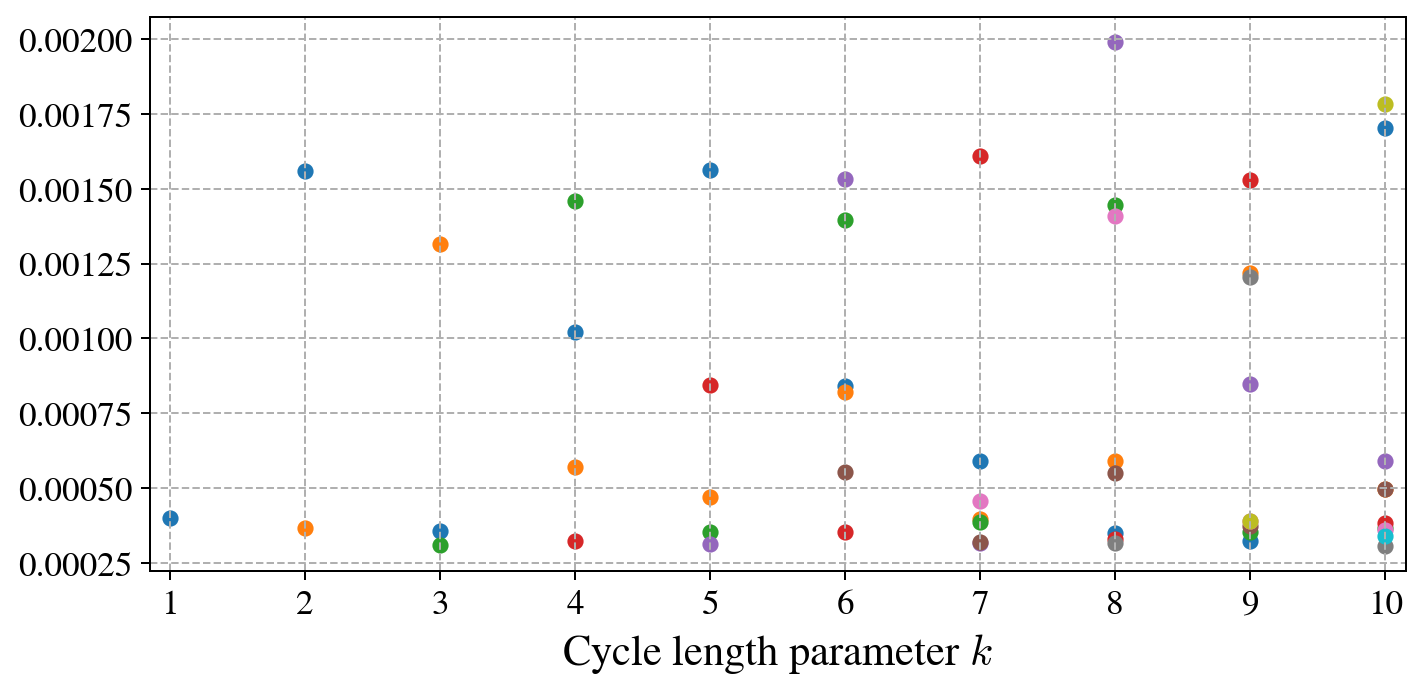}\caption{Average and variance of returns of the best constant strategies defined
on the decomposed subsequences of assets' returns corresponding to
each cycle length parameter $k$. For each $k$, the dots of the same
color in the left and right graphics correspond to the same subsequence.\label{figure 4}}
\end{figure}

\section{Summary and Concluding Remarks}

In this paper, we extended Cover's Universal Portfolio to a parallel
learning algorithm for constructing a dynamic strategy that is consistent
in asymptotic growth rate with the best $k$-cyclic constant strategy,
which is the generalization of the best constant strategy with a longer
repeated cycle, in hindsight as time increases to infinity, regardless
of the realizations of the assets' returns sequence. The proposed
learning algorithm decomposes the original sequence of assets' returns
into multiple subsequences, transforming the original repeated game
into a simultaneously repeated game. This mechanism enables the constructed
dynamic strategy to take advantage solely of the different statistical
properties of the decomposed subsequences, without requiring side
information during the learning process, to eventually exceed the
cumulative wealth of the Universal Portfolio strategy. Furthermore,
while the cumulative wealth yielded by the best constant strategy
inherently exceeds that yielded by the Universal Portfolio strategy
with a large gap over time, it can eventually be significantly outperformed
by that yielded by the strategy constructed according to the proposed
learning algorithm. In addition, when the assets' returns exhibit
serial dependence as a block-wise i.i.d. process, the strategy constructed
according to the generalized Kelly criterion, which is the $k$-cyclic
constant strategy defined by the $k$-log-optimal portfolios, yields
the highest asymptotic growth rate. When the joint distribution of
the block-wise i.i.d. process is unknown, rendering the $k$-log-optimal
portfolios undetermined and making learning algorithms designed for
stationary processes inapplicable, the proposed learning algorithm
generates a strategy that still asymptotically grows to the highest
rate using only the past realizations of the assets' returns.\smallskip

The results of the long-term experiments with real market data for
several values of the parameter $k$ demonstrate the theoretical guarantee
that the proposed learning algorithm can generate a strategy that
eventually yields higher cumulative wealth and growth rate than those
yielded by the Universal portfolio strategy for all cycle lengths
$k$, and also higher than those yielded by the best constant strategy,
provided that $k$ is properly tuned. In general, the cumulative wealth
yielded by the constructed strategy increases as the cycle length
increases in multiples of $k$, yet its empirical performance depends
on the investment horizon, since the convergence speed of its growth
rate to that of the corresponding best $k$-cyclic constant strategy
is also slower. Specifically, the parameter $k$ should be chosen
small enough so that the resulting number of cycles is sufficiently
large relative to the investment horizon, such as $k=2$ or $k=4$
as in the experiments. The outperformance in asymptotic growth rate
of the constructed strategy corresponding to $k>1$ over the one corresponding
to $k=1$, i.e., the Universal Portfolio strategy, invalidates the
classical Kelly criterion in the experimental context and highlights
the benefit of decomposing the assets' returns sequence into multiple
subsequences for improving strategy performance. In particular, as
more subsequences are created when the parameter $k$ increases, the
discrepancy between their characteristics enables the best constant
strategies defined on each of them to achieve significantly higher
average returns compared to that of the best constant strategy defined
on the original sequence.\smallskip

\textbf{Additional notes on practice and other applications}. It is
useful to first recall that the proposed algorithm and the associated
$k$-cyclic constant strategy concept can be applied to any trading
frequency, not limited to the daily frequency used in the experiments.
In practical application, given a sufficiently long investment or
trading horizon, suppose either $k=2$ or $k=3$ is a good choice,
as suggested by the experimental results; then an investor may set
the parameter $k=6$ for the learning algorithm, since the constructed
strategy can guarantee a growth rate similar to that of the strategy
corresponding to the hypothetical $k$ over time. However, if there
are many possible choices for $k$, choosing the cycle length as the
product of all candidate values leads to a very small number of created
cycles relative to the full investment horizon, and the consistency
in growth rate between the constructed strategy and the best $k$-cyclic
constant strategy will manifest significantly more slowly. To address
this issue more efficiently, the investor may construct strategies
with respect to different values of $k$ and combine them into a single
strategy using either the ensembling method proposed in \citet{Lam2024a}
or the exponential weighted average approach as in \citet{CesaBianchi2006},
which guarantees an asymptotic growth rate matching that of the best
strategy corresponding to the optimal choice of $k$, while achieving
a faster convergence speed than using the product of all possible
values. This solution is particularly useful in applying the generalized
Kelly criterion when both the joint distribution and the block length
$k$ of the block-wise i.i.d. process of assets' returns are unknown.
Furthermore, beyond strategy construction, the proposed learning algorithm
and the associated $k$-cyclic constant strategy concept can also
be applied as an ensembling mechanism for combining multiple strategies,
offering a significant improvement over the original ensembling algorithms
and their accelerated variants developed under the distribution-free
preference framework in \citet{Lam2024a}.

\pagebreak\addcontentsline{toc}{Section}{Bibliography}

\end{document}